\renewcommand*{\url}[1]{\href{#1}{#1}}
\theoremstyle{plain}
\newtheorem{thm}{\protect\theoremname}
\theoremstyle{plain}
\newtheorem{lem}[thm]{\protect\lemmaname}
\theoremstyle{plain}
\newtheorem{prop}[thm]{\protect\propositionname}
\theoremstyle{remark}
\newtheorem*{rem*}{\protect\remarkname}
\theoremstyle{plain}
\theoremstyle{plain}
\theoremstyle{definition}
\newtheorem{defn}[thm]{\protect\definitionname}
\theoremstyle{plain}
\newtheorem*{thm*}{\protect\theoremname}
\theoremstyle{plain}
\newtheorem*{lem*}{\protect\lemmaname}
\providecommand{\propositionname}{Proposition}
\providecommand{\theoremname}{Theorem}
\providecommand{\lemmaname}{Lemma}
\providecommand{\remarkname}{Remark}
\providecommand{\conjecturename}{Conjecture}
\providecommand{\definitionname}{Definition}
\providecommand{\corollaryname}{Corollary}
\def\bra#1{\langle{#1}\vert}
\def\ket#1{\vert{#1}\rangle}
\def\BraVert{\egroup\,\mid\,\bgroup}
\def\tr#1{\mbox{tr}\left[{#1}\right]}
\newcommand{\ptr}[2]{\mbox{tr}_{#1}\left[ #2 \right]}
\newcommand{\inp}{\texttt{i}}
\newcommand{\out}{\texttt{o}}
\begin{document}


\title{Quantum Markov Order}
\date{\today}

\author{Philip Taranto}
\email{philip.taranto@monash.edu}
\affiliation{School of Physics \& Astronomy, Monash University, Clayton, Victoria 3800, Australia}

\author{Felix A. Pollock}
\affiliation{School of Physics \& Astronomy, Monash University, Clayton, Victoria 3800, Australia}

\author{Simon Milz}
\affiliation{School of Physics \& Astronomy, Monash University, Clayton, Victoria 3800, Australia}

\author{Marco Tomamichel}
\affiliation{Centre for Quantum Software and Information, School  of Software, University  of  Technology Sydney, Sydney NSW 2007, Australia}

\author{Kavan Modi}
\affiliation{School of Physics \& Astronomy, Monash University, Clayton, Victoria 3800, Australia}


\begin{abstract}
We formally extend the notion of Markov order to open quantum processes by accounting for the instruments used to probe the system of interest at different times. Our description recovers the classical property in the appropriate limit: when the stochastic process is classical and the instruments are non-invasive, \emph{i.e.}, restricted to orthogonal, projective measurements. We then prove that there do not exist non-Markovian quantum processes that have finite Markov order with respect to all possible instruments; the same process exhibits distinct memory effects when probed by different instruments. This naturally leads to a relaxed definition of quantum Markov order with respect to specified sequences of instruments. The memory effects captured by different choices of instruments vary dramatically, providing a rich landscape for future exploration.
\end{abstract}
\maketitle


\textit{Introduction}.---Fundamentally, physical laws are local in time, yet memory effects pervade processes studied throughout the sciences, since no system is isolated~\cite{StochProc}. Our inability to capture interactions between a system of interest and its environment leads to stochastic dynamics for the system, with information about its history influencing future evolution, often leading to a build-up of correlations over time~\cite{BreuerPetruccione}. Such temporal correlations are exhibited over various timescales in complex phenomena. However, a natural notion of \emph{memory length} emerges in the context of statistical modeling: the amount of a system's history that directly affects its future. This, importantly, dictates the resources required for simulation, which grow exponentially in the memory length (even classically)~\cite{Salzberg1998,Thijs2001,Rosvall2014}. Fortunately, most processes have an effectively finite-length memory, permitting an efficient description that considers only the portion of history necessary to predict the future~\cite{Pollock2018T}. Alternatively, given control over some quantum degrees of freedom for some duration, any process with the corresponding memory length can be simulated. Indeed, manipulating memory effects has proven advantageous in various information-processing tasks such as preserving coherence~\cite{NielsenChuang,Viola1999,Banaszek2004,Erez2008,Reich2015}. Clearly, memory will need to be exploited to develop near-term quantum technologies. 

In the classical setting, the finite-length memory approximation underpins the often-invoked order-$\ell$ Markov models, which use information of only the past $\ell$ observed states to predict the next. However, even in the simplest case of \emph{memoryless}, or \emph{Markovian}, dynamics (\emph{i.e.}, $\ell=1$), the study of stochastic processes is vastly different in quantum mechanics than its classical counterpart, mainly because, in the former, one must necessarily disturb the system to observe realizations of the process, breaking an implicit classical assumption~\cite{Modi2011,Modi2012,Modi2012A,Milz2017}. Crucially, this leads to a breakdown of the Kolmogorov extension theorem~\cite{Kolmogorov,Feller,Milz2017KET}, which provides the mathematical foundation of stochastic processes that allows for calculation of conditional probability distributions. Conventional approaches to quantum stochastic processes attempt to sidestep this problem by describing properties of the process in terms of the time-evolving system density operator, failing to capture multi-time effects~\cite{Breuer2016}; others constrain system-environment interactions to specify memory mechanisms~\cite{Giovannetti2012,Lorenzo2017A,Lorenzo2017}; both perspectives lead to necessary but insufficient criteria for Markovianity~\cite{Li2018}. 

The aforementioned issues can be circumvented by separating the controllable influence on the system from the underlying process, as achieved by various modern frameworks, including the process matrix~\cite{Oreshkov2012} and process tensor formalisms~\cite{Pollock2018A,Pollock2018L}. These represent processes as quantum combs~\cite{Chiribella2008,Chiribella2008-2,Chiribella2009}, mapping sequences of probing instruments to accessible joint probability distributions through a generalized spatio-temporal Born rule~\cite{ShrapnelCosta2017}. They have been used to extend the causal modeling paradigm (originally developed for classical processes~\cite{Pearl}) to quantum theory~\cite{Araujo2015,Costa2016, Oreshkov2016, Allen2017, Ringbauer2017}. Most importantly for our purposes, by capturing all multi-time correlations, these frameworks provide unambiguous conditions for a process to be Markovian, unifying all previous approaches~\cite{Pollock2018L}. Like the joint probability distribution characterizing classical stochastic processes, a quantum stochastic process suffers exponentially increasing complexity with respect to its memory length, with the added complication that all possible sequences of interventions must be accounted for. This naturally begs the question: are there quantum processes with finite-length memory, and hence significantly reduced complexity?

In this Letter, we extend the notion of Markov order to quantum stochastic processes. We begin by discussing classical Markov order to motivate its generalization to the quantum realm. We use the process tensor formalism to prove our main result: non-Markovian quantum processes, generically, have infinite Markov order. Afterwards, we formulate the conditions for a quantum process to have finite Markov order in a constrained setting. The structure of quantum Markov order is far richer than its classical counterpart, as explored in detail in an accompanying Article~\cite{Taranto2018A}.


\textit{Classical Markov Order}.---The concept of Markov order is essentialized by the following question: is knowledge of a portion of the history of a process sufficient to predict future statistics? Consider an $(n+1)$--step classical stochastic process, segmented into three intervals: the future $F=\{t_n, \hdots, t_k\}$, the memory $M=\{t_{k-1}, \hdots, t_{k-\ell}\}$, and the history $H = \{t_{k-\ell-1}, \hdots, t_0\}$ (in principle, the history and future can extend infinitely long). The random variables $X_j$ describing the system (with the subscript denoting the timestep) are grouped similarly: $\{X_F, X_M, X_H\}$. The Markov order of the process is defined in terms of the conditional statistics of these random variables:

\begin{defn}\label{def:cmarkovorder} \textbf{(Classical Markov Order)} A classical stochastic process has Markov order-$\ell$ if the conditional probability for any realization $x_F$ of the random variables $X_F$ beyond any time $0< t_k < t_n$ depends only on the realizations $x_M$ of those in the previous $\ell$ timesteps, and not on realizations $x_H$ of the earlier history:
    \begin{align}\label{eq:cmarkovorder} 
        \mathbbm{P}_{F}(x_{F}|x_{M}, x_H) =\mathbbm{P}_{F}(x_{F}| x_{M}). 
    \end{align}
As a special case, $\ell=1$ corresponds to a Markovian process.
\end{defn}

The property of Markov order-$\ell$ constrains the underlying joint probability distribution characterizing the process, from which these conditional distributions arise. It follows from Eq.~(\ref{eq:cmarkovorder}) that, for any realization of events in any length-$\ell$ block $M$, the joint conditional distribution over $F$ and $H$ factorizes: 
\begin{align}\label{eq:cmarkovcondindep} 
    \mathbbm{P}_{FH}(x_F,x_H|x_M) = \mathbbm{P}_F(x_F|x_M) \mathbbm{P}_H(x_H|x_M).
\end{align}
In other words, the future and the history are \emph{conditionally independent}, given specification of events in the memory. This conditional independence is equivalently expressed by the vanishing classical conditional mutual information (\textbf{CMI}), $I(F:H|M)=0$.

While Markov order-$\ell$ dictates that the next state depends only upon the previous $\ell$, it does not imply an absolute demarcation of timesteps into blocks of memory and irrelevant history. Instead, the memory blocks corresponding to different timesteps overlap, permitting the existence of unconditional correlations between timesteps with separation greater than $\ell$ in general~\footnote{These are themselves often referred to as memory. Here, we distinguish the total temporal correlations between observables from those resulting from \textit{non-Markovian} memory, which can survive interventions that reset the system's state.}; however, such correlations are always mediated through overlapping memory blocks (see Fig.~\ref{fig:memory}). Markov order thus quantifies how much of the history one must remember to predict the future, providing a natural ``measure'' for the memory length of the process.


\begin{figure}[t]
\centering
\includegraphics[width=\linewidth]{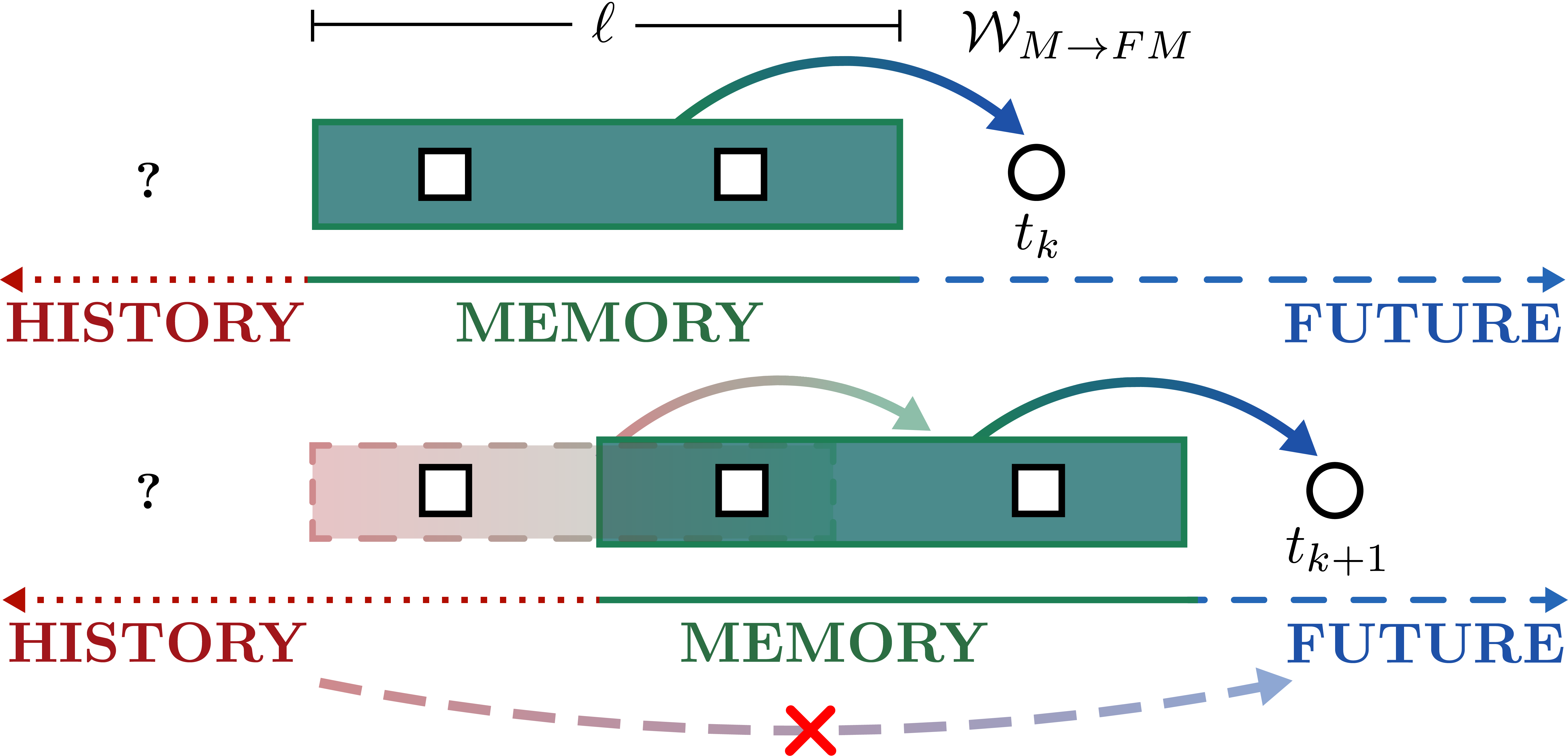}
\caption{Knowledge of the $\ell=2$ states in the memory block suffices to predict the future statistics via the recovery map $\mathcal{W}_{M \to FM}$. No information about the history is required to determine these probabilities, indicated by the question mark. This property holds true for all timesteps. Any influence the history (beyond $\ell$ timesteps ago) has on the future is mediated through memory blocks; conditional on the states in the most recent block, there are no correlations between history and future, indicated by the dashed arrow.} \label{fig:memory}
\end{figure}


Operationally, the significance of finite memory length is best encapsulated through the notion of a recovery map, $\mathcal{W}_{M \to FM}$, which acts only on $M$ to give the correct future statistics: $\mathbbm{P}_{FMH}(x_F, x_M, x_H) = \mathcal{W}_{M \to FM} [\mathbbm{P}_{MH}(x_M, x_H)]$. The complexity of any predictive model is fundamentally bounded by the length of the block $M$ on which it acts (as well as the number of possible realizations of each $X_j$). Recently, the recovery map has featured in the quantum information literature: here, quantum Markov chains are defined as \textit{states} with vanishing quantum CMI~\cite{Ruskai2002, Petz2003}, or, equivalently, those satisfying quantum generalizations of recoverability~\cite{Hayden2004, Ibinson2008, Fawzi2015, Wilde2015, Sutter2016, Sutter2017}. However, it is unclear how such characterizations relate to \emph{temporal processes}, where one has access to an evolving quantum system across multiple times. We now consider a framework that provides the most general description of quantum (and classical) stochastic processes, allowing us to extend the concept of Markov order to quantum processes.

\textit{Quantum Stochastic Processes.}---A classical stochastic process is characterized by the joint probability distribution $\mathbbm{P}(x_n,t_n; \ldots; x_2,t_2; x_1,t_1)$ over events at different times, expressing, \emph{e.g.}, the probability for a molecule to be found in region $x_1$ at time $t_1$ \emph{and} region $x_2$ at $t_2$, and so forth. Analogously, a quantum stochastic process can be considered as a set of joint probability distributions for a sequence of measurement outcomes. However, in contrast, there is a continuous family of possible measurements and the choice of measurement at one time (or even whether to measure at all) can affect future statistics~\cite{Milz2017KET,Pollock2018A,Pollock2018L}. To account for this, we separate the \emph{controllable} influence applied to the system corresponding to a set of measurement outcomes, $\{ x \}$, given a choice of experimental instrument, $\mathcal{J}$, from the \emph{uncontrollable} underlying process. 

On the controllable side, the experimenter can apply any valid transformation to the system at each timestep, formally described by an \emph{instrument}. Mathematically, an instrument is a collection of completely-positive (\textbf{CP}) maps $\mathcal{J}_j = \{ O_j^{(x_j)}\}$ that chronicle the transformation the system undergoes upon realization of each measurement outcome. Specifying an instrument at timestep $t_j$ allows an experimenter to observe outcome $x_j$ with probability $\mathbbm{P}_j(x_j|\mathcal{J}_j)$. The average transformation effected by the instrument is given by the completely-positive trace-preserving (\textbf{CPTP}) map $O_j^{\mathcal{J}_j}:=\sum_{x_j}O_j^{(x_j)}$. Without loss of generality, we use the Choi-Jamio{\l}kowski isomorphism to represent all such maps as bipartite quantum states~\cite{Milz2017,Jamiolkowski1972,Choi1975}. Moreover, instruments can be extended across multiple timesteps, describing correlated measurements and repeated interactions with an ancilla; implementing such an \emph{instrument sequence} yields the joint statistics $\mathbbm{P}_{n:0}(x_{n:0}|\mathcal{J}_{n:0})$. The corresponding correlated transformations to the system associated to observing a sequence $x_{n:0}$ can be represented as a many-body Choi state $O_{n:0}^{(x_{n:0})}$. 

Crucially, to describe quantum stochastic processes, one must distinguish between such instruments and the underlying process. The former constitutes all that is controllable by an experimenter, while the latter stems from the uncontrollable system-environment dynamics. The process itself is encapsulated in the \emph{process tensor}, $\Upsilon_{n:0}$, whose Choi state is a multipartite density operator, naturally generalizing the joint probability distributions that characterize classical stochastic processes~\cite{Milz2018}. The process tensor is a linear map taking any sequence of transformations $O_{n:0}^{(x_{n:0})}$ to the corresponding joint probability distribution of its realization; it thus contains \emph{all} multi-time probabilities deducible by \emph{all} possible instrument sequences, calculated via:
\begin{align}\label{eq:processtensor}
    \mathbbm{P}_{n:0}(x_{n:0}|\mathcal{J}_{n:0}) = \tr{ O_{n:0}^{(x_{n:0})} \Upsilon_{n:0}}.
\end{align}
This is a temporal generalization of the Born rule~\cite{ShrapnelCosta2017}, and is directly analogous to its spatial counterpart, which relates observed statistics to measurement operators (instead of instruments) and a density operator (instead of a process tensor). That such an object exists is a consequence of the linearity of quantum mechanics; like the density operator, the process tensor can be (and has been~\cite{Ringbauer2015}) tomographically constructed in a finite number of experiments. In anticipation of our main results, we emphasize that Eq.~\eqref{eq:processtensor} can be used to deduce conditional processes given specification of outcomes over a subset of timesteps by restricting the trace appropriately (see Appendix~\ref{app:processtensor} for further details on the process tensor formalism). 

The process tensor extends the CPTP map paradigm to capture multi-time effects, and can be simulated using open systems techniques capable of computing multi-time correlations~\cite{Pollock2018A}. Once known, it allows one to, \emph{e.g.}, calculate the system density operator at each timestep; but, importantly, also includes all multi-time correlations, providing the most general description of open dynamics within quantum and classical physics. Notably, this approach has been used to prove that quantum processes satisfy a generalized Kolmogorov extension theorem~\cite{Milz2017KET}, thereby allowing joint and conditional probability distributions to be calculated. Crucially, we can now meaningfully construct quantum generalizations of Def.~\ref{def:cmarkovorder}, granting a fundamental study of memory in quantum processes.


\textit{Quantum Markov Order.}---For any fixed choice of instruments used to probe a quantum process, one yields a probability distribution describing a classical stochastic process~\cite{Pollock2018L}. A natural approach to extending Markov order to quantum processes is to demand such classical processes satisfy Def.~\ref{def:cmarkovorder}; however, each choice of instruments generally leads to statistics describing different classical processes. Nonetheless, a sensible requirement of a quantum process with finite-length memory is that \emph{any} future statistics deducible (no matter which future instruments are chosen) are conditionally independent of \emph{any} historical statistics, given knowledge of a length-$\ell$ instrument sequence on the memory. We define quantum Markov order accordingly:
\begin{defn}\label{def:qmarkovorder} \textbf{(Quantum Markov Order)}
    A quantum stochastic process has Markov order-$\ell$ with respect to a family of instruments $\{\mathcal{J}_M\}$ when the statistics deducible from the process satisfy Def.~\ref{def:cmarkovorder}:
    \begin{align}\label{eq:qmarkovorder1}
        \hspace{-0.25em}\!\mathbbm{P}_{F}(x_{F}|\mathcal{J}_{F};x_{M},\mathcal{J}_{M};x_{H},\mathcal{J}_{H})\!=\!\mathbbm{P}_{F}(x_{F}| \mathcal{J}_{F}; x_{M},\mathcal{J}_{M}),\!
    \end{align}
    for each $\mathcal{J}_M$ and for all possible history and future instruments $\mathcal{J}_H$ and $\mathcal{J}_F$.
\end{defn}

Intuitively, this means that for any future instruments one might apply to the system, the statistics of different measurement outcomes are determined by the most recent $\ell$ instruments and outcomes. Equivalently, given specification of the outcomes of the past $\ell$ instruments, the process governing the future dynamics is uncorrelated with that of the history, guaranteeing that any deducible statistics on the history and future are independent. Def.~\ref{def:qmarkovorder} leads to the following product structure condition on the process tensor (see Appendix~\ref{app:definition}): 
\begin{gather}\label{eq:qmarkovorderchoi}
    \Upsilon_{FH}^{(x_M)} := \ptr{M}{ O_M^{(x_M)} \Upsilon_{FMH}}
    = \Upsilon_F^{(x_M)} \otimes \Upsilon_H^{(x_M)} 
\end{gather}
for all $O_M^{(x_M)}\in \mathcal{J}_M$. In analogy to Eq.~(\ref{eq:cmarkovcondindep}), the conditional history and future processes are independent for each realization of the instrument applied. Importantly, Def.~\ref{def:qmarkovorder} reduces to Def.~\ref{def:cmarkovorder} in the correct limit:

\begin{thm}\label{thm:classicallimit}
When restricted to classical stochastic processes, Def.~\ref{def:qmarkovorder} reduces to Def.~\ref{def:cmarkovorder} for any choice of (sharp) classical instruments.
\end{thm}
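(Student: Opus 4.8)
The plan is to unpack Def.~\ref{def:qmarkovorder} in the case where the process tensor $\Upsilon_{n:0}$ describes a classical stochastic process and the memory instruments $\mathcal{J}_M$ are sharp (projective, rank-one) measurements in the distinguished classical basis, and show that Eq.~\eqref{eq:qmarkovorder1} collapses onto Eq.~\eqref{eq:cmarkovorder}. First I would recall what ``classical stochastic process'' means at the level of the process tensor: there is a fixed orthonormal basis $\{\ket{x}\}$ such that the Choi state $\Upsilon_{n:0}$ is diagonal in the corresponding product basis on all input/output legs, with diagonal entries that are precisely the joint probabilities $\mathbbm{P}(x_n,\ldots,x_0)$ of the underlying classical process (this is the standard statement that a classical process embeds into the comb formalism; it is the content of the classical limit discussed around Ref.~\cite{Pollock2018L}). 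Correspondingly, a sharp classical instrument at timestep $t_j$ is the collection $\mathcal{J}_j = \{\ketbra{x_j}{x_j}(\cdot)\ketbra{x_j}{x_j}\}_{x_j}$, whose Choi operators are the rank-one projectors $O_j^{(x_j)} = \ketbra{x_j x_j}{x_j x_j}$ onto the diagonal.

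The core computation is then to evaluate the Born rule \eqref{eq:processtensor} with these ingredients. Because $\Upsilon_{n:0}$ is diagonal and each $O_j^{(x_j)}$ projects onto a single diagonal element, the trace $\tr{O_{n:0}^{(x_{n:0})}\Upsilon_{n:0}}$ simply reads off the matrix element $\bra{x_n\cdots x_0}\Upsilon_{n:0}\ket{x_n\cdots x_0}$, which by the identification above equals $\mathbbm{P}(x_n,\ldots,x_0)$. Thus for sharp classical instruments the joint statistics produced by the process tensor are exactly the joint distribution of the classical process, independent of which sharp instruments are chosen — there is only one ``choice'' once we fix the classical basis. I would then do the same for the partial-trace expression \eqref{eq:qmarkovorderchoi}: contracting $\Upsilon_{FMH}$ with the rank-one $O_M^{(x_M)}$ yields the (unnormalized) diagonal operator on $FH$ whose entries are $\mathbbm{P}(x_F, x_M, x_H)$ at fixed $x_M$, i.e. the classical object $\mathbbm{P}_{FH}(x_F, x_H | x_M)$ up to the factor $\mathbbm{P}_M(x_M)$. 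Dividing by the appropriate marginals — equivalently, forming the conditionals in \eqref{eq:qmarkovorder1} — the quantum Markov order condition \eqref{eq:qmarkovorder1} becomes the statement that $\mathbbm{P}_F(x_F|x_M,x_H) = \mathbbm{P}_F(x_F|x_M)$, which is precisely Def.~\ref{def:cmarkovorder}, and the product condition \eqref{eq:qmarkovorderchoi} becomes the classical factorization \eqref{eq:cmarkovcondindep}. Conversely, if the classical process has Markov order $\ell$ then these diagonal identities hold, so \eqref{eq:qmarkovorder1} is satisfied for every sharp classical choice; hence the two definitions coincide on this class.

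The step I expect to be the main obstacle — or at least the one needing the most care — is pinning down the precise dictionary between ``classical stochastic process'' and the structure of $\Upsilon_{n:0}$, in particular keeping track of normalization and of the causal-ordering (trace-preservation) constraints the process tensor must satisfy, so that the marginalizations implicit in passing from joint to conditional distributions on the quantum side genuinely reproduce the classical marginalizations. A subsidiary subtlety is arguing that it suffices to check sharp instruments that are \emph{diagonal in the one distinguished basis}: one should note that any other ``sharp classical instrument'' is a basis-relabelling of this one and the process is classical with respect to that fixed basis, so no generality is lost. Modulo these bookkeeping points, the argument is essentially the observation that diagonal states contracted against diagonal rank-one projectors behave exactly like classical probability vectors, so I would present it compactly and relegate the detailed index manipulations to Appendix~\ref{app:definition}.
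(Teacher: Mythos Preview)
Your proposal is correct and follows essentially the same route as the paper: identify the classical process tensor as diagonal in a fixed product basis, take sharp classical instruments to be rank-one projectors in that basis, and observe that the generalized Born rule~\eqref{eq:processtensor} then simply reads off the classical joint distribution $\mathbbm{P}_{FMH}(x_F,x_M,x_H)$, so that Eq.~\eqref{eq:qmarkovorder1} collapses to Eq.~\eqref{eq:cmarkovorder}. The only minor discrepancy is in the precise form of the classical process tensor: the paper places the joint probability on the \emph{input} legs and the identity on the output legs, $\Upsilon_{FMH}^{\text{cl}} = \sum \mathbbm{P}^\inp_{FMH}(y_F,y_M,y_H)\,\Pi_{F^\inp}^{(y_F)}\otimes\Pi_{M^\inp}^{(y_M)}\otimes\Pi_{H^\inp}^{(y_H)}\otimes\mathbbm{1}^\out_{FMH}$, rather than being diagonal on all legs as you describe; this is exactly the ``dictionary'' bookkeeping you correctly flagged as the place needing care, and once fixed the computation is identical.
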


\begin{proof}
It suffices to show that when restricted to probing a classical stochastic process with sharp classical instruments, $\mathbbm{P}_{FMH}(x_F,x_M,x_H|\mathcal{J}^{\text{cl}}_F, \mathcal{J}^{\text{cl}}_M, \mathcal{J}^{\text{cl}}_{H}) = \mathbbm{P}_{FMH}(x_F,x_M,x_H)$. Sharp classical instruments correspond to a complete set of (rank-1) projections onto orthogonal states at each timestep: $\mathcal{J}^{\text{cl}} = \{ \Pi_\inp^{(x)} \otimes \Pi_\out^{(x)}\}$, where $\Pi_\inp^{(x)} = \Pi_\out^{(x)} := \ket{x} \bra{x}$ satisfy $\tr{ \Pi^{(x)} \Pi^{(y)} } = \delta_{xy} \; \forall \; x, y$ ($\inp$ and $\out$ refer, respectively, to the input and output spaces of the maps applied at each timestep). A stochastic process arising from classical physics, \emph{i.e.}, a joint probability distribution, can be encoded in the diagonal of a process tensor with respect to the local product basis that the measurements act in; thus, the process tensor has the structure: $\Upsilon_{FMH}^{\text{cl}} = \sum \mathbbm{P}^\inp_{F M H}(y_F, y_M, y_H) \; \Pi_{F^\inp}^{(y_F)} \otimes \Pi_{M^\inp}^{(y_M)} \otimes \Pi_{H^\inp}^{(y_H)} \otimes \mathbbm{1}^\out_{F M H}$. Evaluating $\mathbbm{P}_{FMH}(x_F,x_M,x_H|\mathcal{J}_F^{\text{cl}},\mathcal{J}^{\text{cl}}_M,\mathcal{J}^{\text{cl}}_H)$ according to Eq.~(\ref{eq:processtensor}) yields $\mathbbm{P}_{FMH}(x_{F},x_M,x_H)$.
\end{proof}

Thus, our definition generalizes Markov order to quantum mechanics. However, for quantum processes, a much richer arsenal of instruments can be implemented. Demanding that Eq.~(\ref{eq:qmarkovorder1}) holds for arbitrary instrument sequences trivializes the theory:

\begin{thm}\label{thm:nogo}
The only quantum processes with finite Markov order with respect to all possible instruments are Markovian.
\end{thm}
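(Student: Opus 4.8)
The plan is to show that if the product structure condition of Eq.~\eqref{eq:qmarkovorderchoi} holds for \emph{all} instruments on the memory block $M$, then $\Upsilon_{FMH}$ must have a tensor product form that severs the future from the history-plus-memory, which is precisely the condition for Markovianity (at $\ell=1$, or more generally that the memory block decouples). Concretely, I would first note that the map $O_M^{(x_M)} \mapsto \Upsilon_{FH}^{(x_M)} = \ptr{M}{O_M^{(x_M)} \Upsilon_{FMH}}$ is linear in the Choi operator $O_M^{(x_M)}$, and that the CP maps $\{O_M^{(x_M)}\}$ span the full space of Hermitian operators on the memory's input-output space. Hence Eq.~\eqref{eq:qmarkovorderchoi} demanding $\Upsilon_{FH}^{(x_M)} = \Upsilon_F^{(x_M)}\otimes \Upsilon_H^{(x_M)}$ for \emph{every} $O_M^{(x_M)}$ is an extremely rigid constraint: I would extend it by linearity so that $\ptr{M}{A_M\, \Upsilon_{FMH}}$ factorizes for every operator $A_M$ on $M$.

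The key step is to test this against a basis of memory operators and extract the structure of $\Upsilon_{FMH}$. Expand $\Upsilon_{FMH} = \sum_{\mu} A_M^{(\mu)} \otimes B_{FH}^{(\mu)}$ in a Hermitian operator basis $\{A_M^{(\mu)}\}$ on $M$ (with the $B_{FH}^{(\mu)}$ the corresponding operators on $FH$), so that $\ptr{M}{A_M\,\Upsilon_{FMH}} = \sum_\mu \tr{A_M A_M^{(\mu)}}\, B_{FH}^{(\mu)}$. The factorization requirement says that for every choice of coefficients $c_\mu := \tr{A_M A_M^{(\mu)}}$, the operator $\sum_\mu c_\mu B_{FH}^{(\mu)}$ is a product across the $F|H$ cut. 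Since the $c_\mu$ range over essentially all of $\mathbb{R}^{d_M^2}$ as $A_M$ varies, this forces $\mathrm{span}\{B_{FH}^{(\mu)}\}$ to be a set of operators \emph{all} of whose linear combinations factorize --- but a linear subspace closed under taking products of this kind, if it has dimension greater than one in a nontrivial way, is impossible unless the $B_{FH}^{(\mu)}$ all share a common tensor structure. I would make this precise: the only subspaces $V \subseteq \mathcal{L}(\mathcal{H}_F)\otimes\mathcal{L}(\mathcal{H}_H)$ in which every element is a simple tensor are of the form $\mathcal{L}(\mathcal{H}_F)\otimes\{X\}$ or $\{Y\}\otimes\mathcal{L}(\mathcal{H}_H)$ (or one-dimensional). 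Either alternative, fed back into $\Upsilon_{FMH} = \sum_\mu A_M^{(\mu)}\otimes B_{FH}^{(\mu)}$, yields $\Upsilon_{FMH} = \Upsilon_{FM}\otimes \Upsilon_H$ or $\Upsilon_{FMH} = \Upsilon_F \otimes \Upsilon_{MH}$; combined with the fact that this must hold simultaneously for the analogous statements obtained by sliding the memory block across all timesteps (the condition in Def.~\ref{def:qmarkovorder} is required for \emph{all} $t_k$), one concludes that consecutive timesteps decouple, i.e.\ the process tensor is a product of single-step CPTP links --- the defining property of a Markovian process.

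The main obstacle I anticipate is the linear-algebra lemma in the middle: showing rigorously that a subspace of $\mathcal{L}(\mathcal{H}_F)\otimes\mathcal{L}(\mathcal{H}_H)$ consisting entirely of product operators must be ``trivial'' in the above sense. This is a statement about the geometry of the Segre variety of rank-one tensors; the subtlety is that a product operator $Y_F \otimes Y_H$ need not be normalized or even nonzero, so one has to handle the cases where some $B_{FH}^{(\mu)}$ vanish or become proportional carefully, and argue that if two product operators $Y_F\otimes Y_H$ and $Y_F'\otimes Y_H'$ have a product sum for all scalings, then either the $F$-parts or the $H$-parts must be proportional. A clean way to do this is to pick two generic coefficient vectors, obtain two product operators in $V$, form a one-parameter family of their combinations, and demand it stays inside the Segre variety; differentiating the rank-one condition (all $2\times 2$ minors vanish) along the family pins down the shared factor. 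Once this lemma is in hand, the rest is bookkeeping: translating ``$\Upsilon_{FMH}$ factorizes across every single cut'' into ``$\Upsilon$ is a chain of independent links,'' which is the established characterization of quantum Markovian processes from Ref.~\cite{Pollock2018L}.
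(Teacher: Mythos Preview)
Your proposal is correct and follows essentially the same route as the paper: use linearity of $O_M \mapsto \ptr{M}{O_M \Upsilon_{FMH}}$ and the fact that CP maps span the full operator space on $M$ to force the $FH$-components into a subspace of pure tensors (your Segre lemma), conclude that either $F$ or $H$ decouples from $M$, and then slide the memory block across timesteps to obtain the Markovian product structure of Eq.~\eqref{eq:markovianstructure}. The only cosmetic difference is that the paper fixes an informationally-complete instrument $\{\Pi_M^{(x)}\}$ and its dual set $\{\Delta_M^{(x)}\}$ from the outset, so the expansion reads $\Upsilon_{FMH}=\sum_x \Upsilon_F^{(x)}\otimes\Delta_M^{(x)}\otimes\Upsilon_H^{(x)}$ directly and your Segre lemma reduces to the observation that $\sum_x c_x\,\Upsilon_F^{(x)}\otimes\Upsilon_H^{(x)}$ is a product for all coefficient vectors $\{c_x\}$ only if the $\Upsilon_F^{(x)}$ or the $\Upsilon_H^{(x)}$ are all proportional.
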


The proof is given in Appendix~\ref{app:proof} and uses the fact that the set of CP maps forms a vector space to show that the only processes satisfying Eq.~\eqref{eq:qmarkovorderchoi} for all instruments have trivial Markov order. Specifically, we show that if a process has finite Markov order for a complete basis of CP maps on $M$, it cannot have finite Markov order with respect to any linear combination of them. This implies the following property:

\begin{rem*}
Any non-Markovian quantum process has infinite Markov order with respect to a generic instrument sequence. 
\end{rem*}

In light of this finding, it is clear that the classical Markov order statement in Def.~\ref{def:cmarkovorder} is weak, as it does not consider \emph{how} one measures outcomes and assumes the ability for sharp observations. Indeed, when one allows for noisy classical measurements, the product structure of Eq.~(\ref{eq:cmarkovcondindep}) breaks down, even for Markovian processes~\cite{Kleinhans2007}. This is not due to intrinsic memory of the process; rather, due to information about the history leaking into the future, thanks to the fuzziness of the measurements. Although this issue is liftable in classical physics, in quantum mechanics it is fundamental: even sharp quantum measurements appear noisy as they do not generally reveal the full state of the system. In contrast to the classical statement, demanding Def.~\ref{def:qmarkovorder} to hold for all instruments is strong, requiring the observed statistics to satisfy the Markov order-$\ell$ property \emph{no matter} how they are measured; Theorem~\ref{thm:nogo} shows that this is too restrictive. This result motivates the following introduction of a relaxed, instrument-specific definition for quantum Markov order.

\textit{Instrument-specific Quantum Markov Order.}---We say that a stochastic process has quantum Markov order-$\ell$ \emph{with respect to the instrument sequence} $\mathcal{J}_{M}$ when Def.~\ref{def:qmarkovorder} is satisfied for each realization of the sequence in question. In terms of the process tensor, this implies that there exists an instrument such that Eq.~\eqref{eq:qmarkovorderchoi} is satisfied. Importantly, whilst the instrument on the memory block must be specified, the history and future instruments remain arbitrary: for each realization of the memory instrument, any deducible statistics on the history and future are conditionally independent. This is illustrated in Fig.~\ref{fig:erasure}, where the transformations $O_M^{(x_M)}$ that `break apart' the process are temporally correlated (as they will be generically). Interestingly, quantum processes with finite Markov order have starkly distinct properties from their classical counterparts.

\begin{prop}
In contrast to classical processes, quantum processes with finite instrument-specific quantum Markov order can have non-vanishing quantum CMI.
\end{prop}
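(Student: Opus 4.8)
The plan is to establish the claim by explicit construction: exhibit a single quantum process tensor $\Upsilon_{FMH}$ together with an instrument $\mathcal{J}_M$ such that the product condition of Eq.~\eqref{eq:qmarkovorderchoi} holds for every realization $O_M^{(x_M)} \in \mathcal{J}_M$, while the Choi state $\Upsilon_{FMH}$ itself (or at least its relevant marginal) has strictly positive quantum conditional mutual information $I(F\!:\!H|M)_\Upsilon > 0$. The point is that the instrument-specific notion only constrains the \emph{conditioned} objects $\Upsilon_{FH}^{(x_M)}$ obtained after applying the CP maps in $\mathcal{J}_M$ and tracing out $M$; it places no factorization requirement on the unconditioned process tensor, whereas the quantum CMI is a property of that full tripartite state. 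So the two quantities are genuinely independent, and a witness suffices.

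Concretely, the steps I would carry out are: (i) pick the smallest nontrivial case, e.g. a three-step process with qubit system, so $\Upsilon_{FMH}$ lives on $F^{\inp}F^{\out}M^{\inp}M^{\out}H^{\inp}H^{\out}$ (or a suitably coarse-grained version). (ii) Choose $\mathcal{J}_M$ to be a single sharp instrument on $M$, say the computational-basis projective instrument $\{\Pi^{(x_M)}\otimes\Pi^{(x_M)}\}$ with $\Pi^{(x_M)} = \ket{x_M}\!\bra{x_M}$, so that $\ptr{M}{O_M^{(x_M)}\Upsilon_{FMH}}$ amounts to projecting onto a fixed basis vector on the $M$ legs. (iii) Engineer $\Upsilon_{FMH}$ so that, \emph{for each} $x_M$, this projection yields a product $\Upsilon_F^{(x_M)}\otimes\Upsilon_H^{(x_M)}$ — this is easy to arrange by building $\Upsilon$ blockwise, e.g. as $\Upsilon_{FMH} = \sum_{x_M} p_{x_M}\, \sigma_F^{(x_M)} \otimes \Pi_M^{(x_M)} \otimes \tau_H^{(x_M)}$ plus off-block-diagonal coherences on the $M$ legs that are annihilated by the projections but spoil the global product structure. (iv) Compute $I(F\!:\!H|M)_\Upsilon$ for the resulting state and verify it is nonzero — the off-diagonal coherences on $M$, or a suitable choice of $\sigma_F^{(x_M)},\tau_H^{(x_M)}$ that makes the tripartite state a known non-Markov quantum state (e.g. a variant of the standard $I(A\!:\!C|B)>0$ examples), will do the job. (v) Finally, confirm that the object is a legitimate process tensor, i.e. a valid quantum comb — positive semidefinite with the correct hierarchy of trace/causality conditions linking successive time steps — since not every multipartite state is a process tensor.

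The main obstacle I anticipate is step (v): reconciling the two demands. The coherences needed to make $I(F\!:\!H|M)_\Upsilon>0$ must (a) be killed by the chosen memory instrument so Eq.~\eqref{eq:qmarkovorderchoi} survives, yet (b) respect the causality constraints of a comb, which are genuinely restrictive — in particular the marginal on the "past" legs $H^{\inp}H^{\out}\cdots M^{\out}$ must be a valid process tensor in its own right after tracing out the future, and trace-preservation conditions link $M^{\out}$ to $M^{\inp}$ etc. The cleanest route is probably to take the process to be a unitary (or isometric) dilation: prescribe a system-environment circuit in which the $M$-step dynamics copies the system state into the environment in the computational basis (a "measure-and-prepare"-like channel), which by construction makes the post-selected future and history product states, while leaving room for the environment to carry correlations that register as nonzero CMI in the un-postselected $\Upsilon$. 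One then checks the CMI numerically or via an explicit small example, and invokes the fact (usable from the process tensor literature cited, e.g.~\cite{Pollock2018A,Pollock2018L}) that any such circuit yields a valid process tensor. I would close by remarking that this dichotomy — instrument-specific finite Markov order coexisting with nonzero quantum CMI — underscores that, unlike the classical case where Eq.~\eqref{eq:cmarkovcondindep} and vanishing CMI are equivalent, in the quantum setting the recovery-map/CMI characterization and the instrument-specific operational characterization of memory length genuinely diverge.
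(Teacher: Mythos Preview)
Your overall strategy---exhibit a single process with finite instrument-specific Markov order but strictly positive $I(F\!:\!H|M)$---is correct and is exactly what the paper does. The route, however, differs. The paper does \emph{not} use a projective instrument with added off-diagonal coherences; instead it takes $\mathcal{J}_M$ to be an informationally complete, non-orthogonal POVM (the qubit SIC-POVM $\{\Pi^{(b)}\}$) and writes $\Upsilon_{FMH}=\sum_b p_b\,\rho_F^{(b)}\otimes\Delta_M^{(b)}\otimes\rho_H^{(b)}$, where $\{\Delta^{(b)}\}$ is the dual set satisfying $\tr{\Delta^{(x)}\Pi^{(y)}}=\delta_{xy}$. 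Each POVM outcome then collapses the sum to a single product term, giving Eq.~\eqref{eq:qmarkovorderchoi}; because the $\Delta^{(b)}$ do not commute (and are not even positive), the state is not a quantum Markov chain, and one computes $I(F\!:\!H|M)\approx 0.059$. This buys the paper an extra conceptual point---the decoupling instrument itself has no classical analogue---whereas your projective-plus-coherences route (for which the GHZ state is the minimal instance, with $I=1$) shows the phenomenon already arises for classically available instruments.

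You are also overcomplicating step~(v). No circuit dilation is needed: both the paper's example and a GHZ-based one are realized as ``hand-out-and-discard'' processes $\Upsilon_{FMH}=\rho^{\inp}_{FMH}\otimes\mathbbm{1}^{\out}$, which satisfy the comb causality constraints of Eq.~\eqref{eq:causality} for \emph{any} tripartite state $\rho$. Your anticipated obstacle therefore evaporates, and the proof reduces to choosing a suitable state on the input legs alone.
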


\begin{proof}
Consider $\Upsilon_{FMH} = \sum_{x} \mathbbm{P}(x) \Upsilon_F^{(x)} \otimes \Delta_M^{(x)} \otimes \Upsilon_H^{(x)}$ such that $\Upsilon_{FMH} \geq 0$ and $\tr{\Delta_M^{(x)} O_M^{(y)}} = \delta_{xy} \, \forall \, x,y$, where each $O_M^{(y)}$ is an element of some instrument $\mathcal{J}_M = \{ O_M^{(y)} \}$. Such processes can have non-vanishing quantum CMI, $I(F:H|M) > 0$ when the Choi states of the $O_M^{(y)}$ do not all commute; indeed, $I(F:H|M)$ is not monotonic with respect to instruments in $M$, and is therefore a poor quantifier for memory strength. Nonetheless, such processes have finite Markov order with respect to the instrument $\mathcal{J}_M$ (see Appendix~\ref{app:example}).
\end{proof}

As highlighted above, quantum Markov order permits a vast landscape of memory effects: i) the decoupling instruments can vary across timesteps (or even be necessarily correlated); ii) at each timestep, instruments need not comprise only orthogonal projectors; iii) deterministic instruments can break future-history correlations; and iv) quantum CMI is not necessarily vanishing for processes with finite quantum Markov order. In an accompanying Article~\cite{Taranto2018A}, we explore the structure of processes with finite quantum Markov order with respect to natural classes of instruments, shedding light on such distinguishing features. We now discuss the broader implications of our work.



\begin{figure}[t]
\centering
\includegraphics[width=\linewidth]{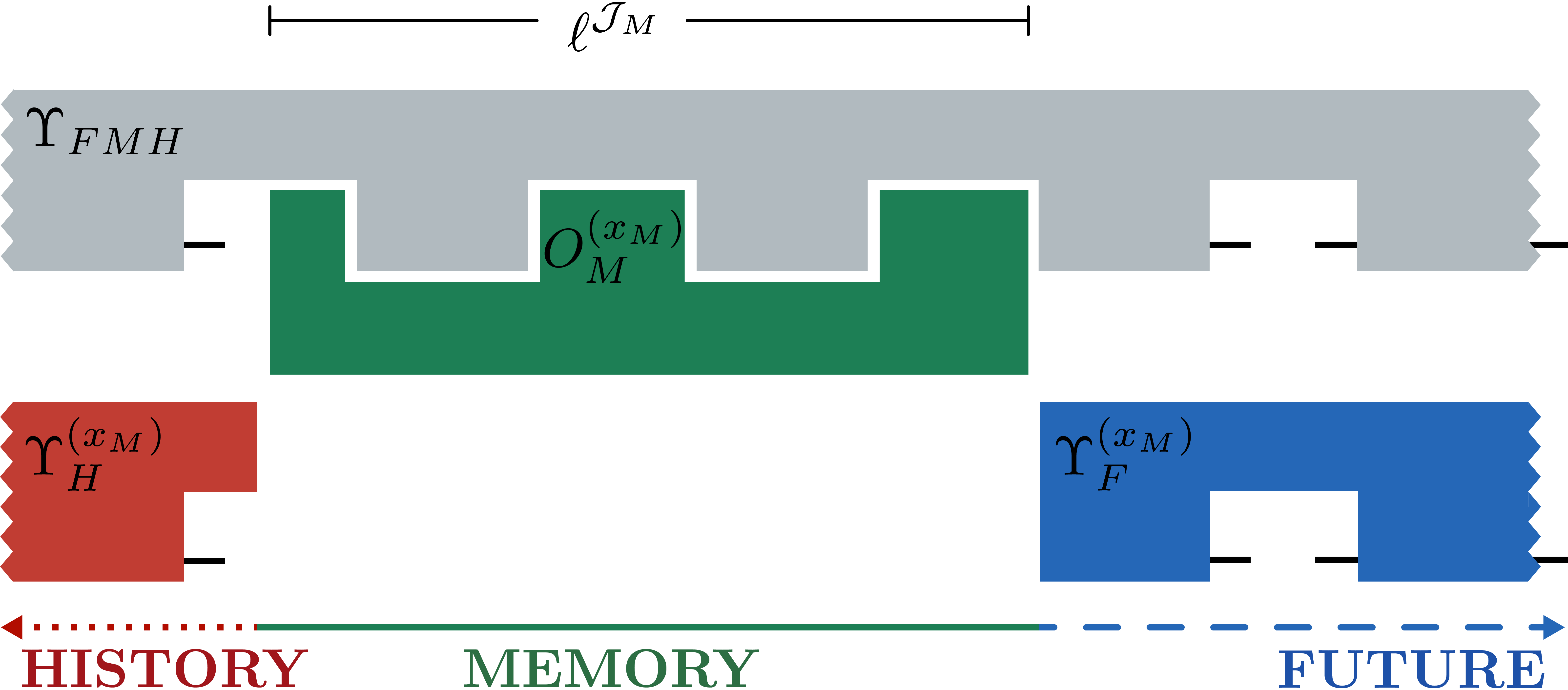}
\caption{An instrument $\mathcal{J}_M$, comprising (temporally correlated) CP maps $\{O_M^{(x_M)}\}$ (green) over $\ell$ timesteps, is applied to a process $\Upsilon_{FMH}$. The process has Markov order-$\ell$ with respect to this instrument, denoted $\ell^{\mathcal{J}_M}$ when, for each possible realization, $x_M$, the history (red) and future (blue) conditional processes are independent.} \label{fig:erasure}
\end{figure}



\textit{Conclusions}.---In this Letter we have formulated an extension of Markov order to the quantum realm, which reduces to the classical condition appropriately. Theorem~\ref{thm:nogo} shows that demanding the proposed condition to hold for all possible instruments is too strict, immediately trivializing the theory. The implication is that, generically, non-Markovian quantum processes have infinite Markov order; they exhibit distinct memory effects when probed differently. Interestingly, such instrument-specific effects have been observed~\cite{Wiesner2006}; our characterization formally explains such behavior. 

This led us to propose a relaxed definition of instrument-specific quantum Markov order. Here, the history and future processes are independent conditioned on each outcome of an instrument sequence specified on the memory. Perhaps surprisingly, when one allows for noisy measurements (or interventions more generally) in the classical setting, a similar relaxation is necessary. This has significant implications for the reconstruction of complex dynamics~\cite{Siefert2003,Bottcher2006,Kleinhans2007,Lehle2011}, highlighting that the standard formulation of Markov order is weaker than is necessary in quantum mechanics, where measurements are inherently fuzzy. 

Our framework opens the door for a comprehensive and unambiguous study of memory effects in quantum processes, which has hitherto been elusive. By capturing multi-time statistics, it goes beyond state-of-the-art descriptions which typically only consider two-point correlations and thus cannot properly describe memory effects.

Non-Markovian processes with strictly finite Markov order are unlikely to be found in nature; however, numerical techniques for open dynamics often invoke finite memory approximations, where rapidly vanishing temporal correlations are truncated~\cite{Makri1995JCP-1, Makri1995JCP-2, deVega2017}. This is tantamount to treating the process as having finite Markov order with respect to the identity instrument (\emph{i.e.}, do nothing) when the correlations considered involve operators evolved freely by the underlying Hamiltonian, although memory approximations involving other choices of instruments can also be made~\cite{Pollock2018Q}.

Understanding memory effects has immediate relevance to developing near-term quantum technologies, particularly concerning the construction of error-correcting codes to combat correlated noise~\cite{Clemens2004,Chiribella2011,Ben-Aroya2011,Lupo2012,Edmunds2017} and the design of feedback protocols for coherent control~\cite{Grimsmo2015,Whalen2017}. Our work poses the following questions for further investigation: Which instrument sequence is optimal in blocking the influence of a given process's history? What constraints are imposed on the underlying system-environment dynamics? How can we measure memory strength to characterize processes with approximately finite memory, and what are the subsequent implications for recoverability? These questions, among others, are critical for both our foundational understanding of quantum theory and the efficient simulation of quantum processes. 


\begin{acknowledgments}
We thank Top Notoh for insightful discussions. P. T. is supported by the Australian Government Research Training Program Scholarship and the J. L. William Scholarship. S. M. is supported by the Monash Graduate Scholarship, the Monash International Postgraduate Research Scholarship and the J. L. William Scholarship. K. M. is supported through the Australian Research Council Future Fellowship FT160100073.
\end{acknowledgments}


\appendix 

\section{Introduction to Process Tensor}\label{app:processtensor}

The process tensor has been derived in a variety of contexts. In this Appendix, we introduce some important concepts from the formalism that should aid the reader in following the more technical details of the Supplemental Material of this Letter. For a more thorough and pedagogical account, we refer the reader to the following developments of the framework: Refs.~\cite{Pollock2018L,Pollock2018A,Milz2017} for an operational open systems dynamics perspective; Refs.~\cite{Chiribella2008,Chiribella2008-2,Chiribella2009} for derivation by way of general quantum circuit architectures; and Refs.~\cite{Oreshkov2012,Araujo2015,Oreshkov2016,ShrapnelCosta2017} for the process matrix approach. Here, we follow the latter one for its clear connection to measurable joint probabilities.

Quantum theory is, at its core, a probabilistic theory about measurement statistics observed through experiment. It is well-known that quantum mechanics cannot be adequately described within standard probability theory due to inherently non-classical features, such as non-commutativity and contextuality. For example, we are required to introduce positive operator-valued measures \textbf{(POVMs)} that associate an operator to each outcome of a measurement, with the Born rule providing the appropriate assignment of probabilities to measurement outcomes. In the standard setting, a POVM is a set of positive semi-definite operators $\mathcal{J} = \{ \Pi^{(x)} \}$ that sum to the identity $\sum_x \Pi^{(x)} = \mathbbm{1}$, with the conditions reflecting the fact that each possible outcome can occur with some non-negative probability and some outcome occurs, respectively. Each element of the POVM corresponds to an \emph{event}, \emph{i.e.}, a realization of the associated measurement outcome. Quantum states are defined as linear functionals on the POVM elements, represented by positive semi-definite, unit-trace, Hermitian operators $\rho \in \mathcal{B}(\mathcal{H})$. The Born rule describes the probability for measuring a certain outcome $x$ given a quantum system described by some density operator $\rho$ is interrogated by the measurement instrument $\mathcal{J}$:
\begin{align}\label{eq:born}
    \mathbbm{P}(x|\mathcal{J}) = \tr{\Pi^{(x)} \rho}.
\end{align}

However, the Born rule does not assign joint probabilities to consecutive events; in order to study temporal processes, we must track the transformations of the system over time upon observation of outcomes, which cannot be accounted for by POVMs. The generalization of a POVM that captures this information is an instrument, which reduces to a POVM for a trivial output space. Formally, an instrument is a collection of CP maps that sum to a CPTP map, with the events of the theory now elevated to correspond to the individual CP elements. All such maps must act linearly to preserve the statistical nature of quantum theory, and, as such, can be represented as bipartite operators through the CJI~\cite{Jamiolkowski1972,Choi1975}. Consider a CP map taking input states to subnormalized output states $\mathcal{O} : \mathcal{B}(\mathcal{H}^\inp) \to \mathcal{B}(\mathcal{H}^\out)$. The CJI associates to any such linear map an operator $O \in \mathcal{B}(\mathcal{B}(\mathcal{H}^\out) \otimes \mathcal{B}(\mathcal{H}^\inp))$ via its action on half of an unnormalized maximally entangled state: $O := \mathcal{O} \otimes \mathcal{I} (\Phi)$, where $\Phi$ is the projector of $\sum_{i=1}^{d^\inp} \ket{ii}$ and $\mathcal{I}$ is the identity map. In this representation, complete-positivity of the map corresponds to a positive semi-definite Choi state and trace-preservation demands $\ptr{\out}{O} = \mathbbm{1}^\inp$. Thus, an instrument is a collection of CP maps $\mathcal{J} = \{ O^{(x)} \}$, each living in $\mathcal{B}(\mathcal{B}(\mathcal{H}^\out) \otimes \mathcal{B}(\mathcal{H}^\inp))$, such that $O^{\mathcal{J}} := \sum_x O^{(x)}$ satisfies $\ptr{\out}{O^{\mathcal{J}}} = \mathbbm{1}^\inp$. The Choi state of a CPTP map therefore has trace equal to $d^\inp$.

In the most general setting, instruments can be correlated in time, which means they need not be restricted to sequences of individual instruments, but can rather be collections of so-called \emph{tester elements}~\cite{Chiribella2009}, which we refer to as an \emph{instrument sequence}. Formally, a tester or instrument sequence is a collection $\{ O_{n:0}^{(x_{n:0})} \}$ of correlated transformations living on $\mathcal{B}(\bigotimes_{j=0}^n \mathcal{B}\left(\mathcal{H}^\out_j) \otimes \mathcal{B}(\mathcal{H}^\inp_j)\right)$, which, when summed over all possible outcomes, yields an overall deterministic proper process, as we soon define. 

Consider now an $(n+1)$--step process probed by a sequence of instruments with Choi states $\{ O^{(x_j)}_j \}$, where the subscript denotes the timestep. In order to yield joint probabilities in time, one can define a multi-linear functional over sets of events (\emph{i.e.}, sequences of CP maps); this is precisely the process tensor, $\Upsilon_{n:0}$. Specifying an element $O_{n:0}^{(x_{n:0})}$ of any valid instrument sequence, the following generalized spatio-temporal Born rule maps any such sequence to the joint probability for it to be realized~\footnote{In the original formulation, the process tensor also outputs the final density operator of the system~\cite{Pollock2018A}, which is not important to our analysis here. Further, note that Eq.~\eqref{eq:generalizedborn} (Eq.~\eqref{eq:processtensor} in the main text) is correct up to a transpose of $O_{n:0}^{(x_{n:0})}$; we do not explicitly indicate this transposition to ease notation, with no affect on any results.}:
\begin{align}\label{eq:generalizedborn}
    &\mathbbm{P}(x_{n:0} | \mathcal{J}_{n:0}) = \tr{ O_{n:0}^{(x_{n:0})} \Upsilon_{n:0}}.
\end{align}
In clear analogy to Eq.~\eqref{eq:born}, here, elements of the instrument sequence play the role of POVM elements, with a sequence of events in time constituting a realization or sample trajectory of the process. Of course, instrument sequences include as a special case deterministic transformations with a single `measurement outcome', such as unitary operations. Applying the identity map at all preceding timesteps yields the density operator of the system at each point in time throughout the dynamics. 

However, crucially, the process tensor description captures much more than the system density operator at each timestep. Indeed, the process tensor generalizes the notion of the density operator, inasmuch as it contains sufficient information to deduce all possible joint probabilities corresponding to the realization of any valid instrument sequence applied, thereby providing an operationally meaningful characterization of a quantum stochastic process. Due to the linearity of quantum mechanics, the process tensor can be probed and experimentally reconstructed in a finite number of experiments, and can be represented as a many-body Choi operator living in $\mathcal{B}(\bigotimes_{j=0}^n \mathcal{B}\left(\mathcal{H}^\out_j) \otimes \mathcal{B}(\mathcal{H}^\inp_j)\right)$. Like the density operator, the process tensor must satisfy certain constraints to ensure that the generalized Born rule above yields a valid probability distribution for all possible instrument sequences. In addition, a natural causality constraint must be satisfied to ensure that the choice of instruments in the future do not affect the statistics observed in the history. In summary, a proper process tensor must be a positive semi-definite, Hermitian operator such that:
\begin{align}\label{eq:causality}
    \ptr{j^\inp}{\Upsilon_{j:0}} = \mathbbm{1}^\out_j \otimes \Upsilon_{j-1:0} \; \forall \, j = 1, \hdots, n. 
\end{align}
The satisfaction of the above hierarchy of linear constraints leads to the fact that the Choi state of a process must have trace equal to the product of the dimension of the system on all of its output Hilbert spaces: $d^\out_{n:0} := d_0^\out \times \hdots \times d_n^\out$. Note the complementarity of the labeling convention regarding the input/output spaces of the process tensor and the instruments applied. These spaces are defined with respect to the perspective of an experimenter probing the process; thus, what is `output' by the process constitutes an `input' for the experimenter. Any valid instrument sequence must therefore satisfy the complementary set of trace conditions to those of Eq.~\eqref{eq:causality}.

Lastly, note that by restricting the trace of Eq.~\eqref{eq:generalizedborn} to a subset of timesteps over which an instrument sequence has been specified, one can deduce the conditional process defined on the remaining timesteps. For example, suppose that an experimenter measures the outcome corresponding to the operator $\Pi^{(x_0)}_{0^\inp}$, followed by a deterministic re-preparation of the state $\rho_{0^\out}$ at the first timestep. Then, the conditional process defined over the remaining timesteps is:
\begin{align}\label{eq:conditionalprocess}
    \Upsilon_{n:1}^{(x_0)} = \ptr{0}{\mathbbm{1}_{n:1} \otimes \rho_{0^\out} \otimes \Pi^{(x_0)}_{0^\inp} \Upsilon_{n:0}} / \, \mathbbm{P}(x_0 | \mathcal{J}_0).
\end{align}
The object $\Upsilon_{n:1}^{(x_0)}$ satisfies the conditions of Eq.~\eqref{eq:causality} and is therefore a proper process, representing the conditional process defined on the timesteps $t_1, \hdots, t_n$ with respect to knowledge that outcome $x_0$ was observed at the first timestep.

\section{Process Tensor Constraint from Def.~\ref{def:qmarkovorder}.}\label{app:definition}

In this Appendix, we show how Eq.~\eqref{eq:qmarkovorderchoi} follows from Def.~\ref{def:qmarkovorder} and Eq.~\eqref{eq:processtensor} when we allow the history and future instruments to remain arbitrary. Firstly, writing out the conditioning and marginalization in Eq.~\eqref{eq:qmarkovorder1} explicitly, we have:
\begin{widetext}
\begin{align}
    \frac{\mathbbm{P}_{FMH}(x_F,x_M,x_H|\mathcal{J}_F,\mathcal{J}_M,\mathcal{J}_H)}{\sum_{x_F}\mathbbm{P}_{FMH}(x_F,x_M,x_H|\mathcal{J}_F,\mathcal{J}_M,\mathcal{J}_H)} &=   \frac{\sum_{x_H}\mathbbm{P}_{FMH}(x_F,x_M,x_H|\mathcal{J}_F,\mathcal{J}_M,\mathcal{J}_H)}{\sum_{x_F x_H}\mathbbm{P}_{FMH}(x_F,x_M,x_H|\mathcal{J}_F,\mathcal{J}_M,\mathcal{J}_H)};
\end{align}
substituting in Eq.~\eqref{eq:processtensor} then leads to:
\begin{align} \label{eq:conditionalprobprocesstensor}
    \frac{\tr{ \left(O_{F}^{(x_F)}\otimes O_{M}^{(x_M)}\otimes O_{H}^{(x_H)}\right) \Upsilon_{FMH}}}{\tr{\left( O_{F}^{\mathcal{J}_F}\otimes O_{M}^{(x_M)}\otimes O_{H}^{(x_H)} \right) \Upsilon_{FMH}}} &= \frac{\tr{ \left( O_{F}^{(x_F)}\otimes O_{M}^{(x_M)}\otimes O_{H}^{\mathcal{J}_H}\right) \Upsilon_{FMH}}}{\tr{\left( O_{F}^{\mathcal{J}_F}\otimes O_{M}^{(x_M)}\otimes O_{H}^{\mathcal{J}_H}\right) \Upsilon_{FMH}}},
\end{align}
\end{widetext}
where we have use the notation introduced in the main text to indicate the average CPTP map $O_{X}^{\mathcal{J}_X}=\sum_{x_X}O_{X}^{(x_X)}$ applied when the instrument $\mathcal{J}_X$ is chosen. We can further simplify the expression by the causal structure of the process tensor (see Eq.~\eqref{eq:causality}) which states that the statistics of the past cannot be influenced by the choice of instrument in the future. This means that $\tr{\left( O_{F}^{\mathcal{J}_F}\otimes O_{M}^{(x_M)}\otimes O_{H}^{(x_H)}\right) \Upsilon_{FMH}}=\tr{ \left(O_{M}^{(x_M)}\otimes O_{H}^{(x_H)}\right) \Upsilon_{MH}}$ for any $\mathcal{J}_F$, where $\Upsilon_{MH}=\ptr{F}{\Upsilon_{FMH}}$. 
Defining the conditional process tensor:
\begin{align}
    \Upsilon_F^{(x_M,x_H)}:=&\frac{\ptr{MH}{\left(\mathbbm{1}_F\otimes O_{M}^{(x_M)}\otimes O_{H}^{(x_H)} \right) \Upsilon_{FMH}}}{\mathbbm{P}(x_M,x_H|\mathcal{J}_M,\mathcal{J}_H)},\notag
\end{align}
with:
\begin{align}
   \mathbbm{P}(x_M,x_H|\mathcal{J}_M,\mathcal{J}_H)\!=\! \tr{\!\left(\mathbbm{1}_F\otimes O_{M}^{(x_M)}\otimes O_{H}^{(x_H)} \right)\!\Upsilon_{FMH}\!},\!\notag
\end{align}
and requiring Eq.~\eqref{eq:conditionalprobprocesstensor} to hold for any $\mathcal{J}_F$ then implies that:
\begin{align}
    \Upsilon_F^{(x_M,x_H)} = \frac{\sum_{x'_H}\mathbbm{P}(x_M,x'_H|\mathcal{J}_M,\mathcal{J}_H)\Upsilon_F^{(x_M,x'_H)}}{\sum_{x''_H}\mathbbm{P}(x_M,x''_H|\mathcal{J}_M,\mathcal{J}_H)}.\label{eq:allJF}
\end{align}
Further requiring this to hold for all $\mathcal{J}_H$ leads to a contradiction unless $\Upsilon_F^{(x_M,x_H)}=\Upsilon_F^{(x_M,x'_H)}=:\Upsilon_F^{(x_M)}\; \forall \, x_H,x'_H$. This can be seen by leaving the instrument fixed but varying the outcome, changing the left hand side of Eq.~\eqref{eq:allJF} but not the right; conversely, fixing the CP map corresponding to the outcome $x_M$ and varying the overall instrument, leading the right hand side changing but not the left.

Rearranging Eq.~\eqref{eq:conditionalprobprocesstensor} and taking only a partial trace over $M$ under these conditions, thereby leaving $\mathcal{J}_F$ and $\mathcal{J}_H$ unspecified, leads to Eq.~\eqref{eq:qmarkovorderchoi} in the main text, with:
\begin{align}\label{eq:condfuture}
    \Upsilon_{F}^{(x_M)} := \frac{d^\out_F}{\alpha^{(x_M)}} \ptr{MH}{O_M^{(x_M)} \Upsilon_{FMH}}
\end{align}
and:
\begin{align}\label{eq:condhistory}
    \Upsilon_{H}^{(x_M)} := \frac{1}{d^\out_F} \ptr{FM}{O_M^{(x_M)} \Upsilon_{FMH}}.
\end{align}
Here, in defining the conditional future process in Eq.~\eqref{eq:condfuture} we have introduced a proportionality constant that depends on $x_M$, defined as: $\alpha^{(x_M)} := \tr{O_M^{(x_M)} \Upsilon_{FMH}}$. By construction, $\Upsilon_{F}^{(x_M)}$ is a proper process tensor for each $x_M$, \emph{i.e.}, a positive semi-definite operator satisfying the causality constraints of Eq.~\eqref{eq:causality}. As such, we know what its trace must be and can normalize accordingly with the appropriate constant. The expression for the conditional history process in Eq.~\eqref{eq:condhistory} is, in general, not a proper process tensor, since it need not satisfy the causality constraint as realizing a sequence of outcomes on $M$ amounts to a post-selection of the history~\cite{Chiribella2008,Milz2018}. Nonetheless, by the normalization of the total probability, we know that $1 = \sum_{x_M} \mathbbm{P}_M(x_M | \mathcal{J}_M, \mathcal{J_H}) = \sum_{x_M} \tr{O^{\mathcal{J}_H}_H \Upsilon_H^{(x_M)}} \, \forall \, O^{\mathcal{J}_H}_H$. This implies, unsurprisingly, that when summed over the possible outcomes $x_M$, the conditional historic process must yield a proper process tensor satisfying the causality constraints of Eq.~\eqref{eq:causality}, \emph{i.e.,} the conditional history process is a tester. These technicalities aside, it is the tensor product form between the conditional history and future processes that provide the important mathematical structure which coincides nicely with our physical intuition, since, upon conditioning on any outcome of the instrument sequence in question, any history and future statistics must be rendered independent.

\section{Proof of Theorem~\ref{thm:nogo}.}\label{app:proof}

We begin with the following Lemma:

\begin{lem}\label{lem:nogo}
The only operators $\Upsilon_{FMH}$ which satisfy Eq.~(\ref{eq:qmarkovorderchoi}) for all possible instruments $\mathcal{J}_M$ are those where the $M$ subsystem is in tensor product with $F$ or $H$ or both.
\end{lem}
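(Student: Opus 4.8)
\textit{Proof strategy.} The plan is to exploit the Choi--Jamio\l{}kowski correspondence: the operators $O_M^{(x_M)}$ appearing in Eq.~\eqref{eq:qmarkovorderchoi} range precisely over the positive semi-definite operators on the $M$-block Choi space, and these span the whole space of Hermitian operators on $M$. First I would reduce the hypothesis ``for all instruments $\mathcal{J}_M$'' to ``for every CP operation on the $M$ block'': any (suitably rescaled) CP operation is an element of some instrument sequence, since one can always complete it to a deterministic tester after a small enough rescaling; and Eq.~\eqref{eq:qmarkovorderchoi} is invariant under rescaling $O_M^{(x_M)}$, as both sides are linear in it while the property of factorizing across $F|H$ is scale-invariant. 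Hence the hypothesis is equivalent to: $\ptr{M}{O_M\,\Upsilon_{FMH}}$ is a product operator on $FH$ for every $O_M \geq 0$ on the $M$ block.

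Next I would set up the linear algebra. Fix a Hermitian basis $\{\tau_a\}$ of the $M$-block Choi space and expand $\Upsilon_{FMH} = \sum_a \Xi_a^{FH}\otimes\tau_a$ with Hermitian coefficient operators $\Xi_a^{FH}$ on $\mathcal{H}_F\otimes\mathcal{H}_H$. Then $\ptr{M}{O_M\,\Upsilon_{FMH}} = \sum_a \tr{O_M\tau_a}\,\Xi_a^{FH}$. The map $O\mapsto(\tr{O\tau_a})_a$ is a linear isomorphism from the Hermitian operators on $M$ onto a Euclidean space, so it sends the (full-dimensional) PSD cone to a cone with nonempty interior; consequently, every operator in a relatively open subset of the real span $V := \mathrm{span}\{\Xi_a^{FH}\}$ is a product operator. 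Since the set of product operators is an algebraic cone in $\mathcal{B}(\mathcal{H}_F)\otimes\mathcal{B}(\mathcal{H}_H)$ --- cut out by the vanishing of the $2\times 2$ minors of the operator-Schmidt coefficient matrix --- and a polynomial vanishing on a relatively open subset of the linear subspace $V$ vanishes on all of $V$, the entire subspace $V$ consists of product operators. (This is exactly the quantitative content of the remark in the main text that finite Markov order for a basis of CP maps on $M$ forces it for all their linear combinations.)

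I would then invoke the classification of linear subspaces of a tensor product $X\otimes Y$ all of whose elements are simple tensors: any such subspace lies in $x_0\otimes Y$ for a fixed $x_0\in X$, or in $X\otimes y_0$ for a fixed $y_0\in Y$. (Short proof: a sum of two linearly independent simple tensors is simple only if the two share a factor up to a scalar; propagating this across a third element pins down a common factor.) Applying this to $V$, either all $\Xi_a^{FH} = R^F\otimes T_a^H$ with a common $R^F$, or all $\Xi_a^{FH} = S_a^F\otimes U^H$ with a common $U^H$. Re-summing over $a$ gives $\Upsilon_{FMH} = R^F\otimes\big(\sum_a T_a^H\otimes\tau_a\big)$, i.e.\ $\Upsilon_{FMH} = \Upsilon_F\otimes\Upsilon_{MH}$, or symmetrically $\Upsilon_{FMH} = \Upsilon_{FM}\otimes\Upsilon_H$; the ``both'' case is precisely when these hold simultaneously, yielding $\Upsilon_F\otimes\Upsilon_M\otimes\Upsilon_H$. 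The converse is immediate, since for $\Upsilon_{FMH}$ of any of these forms $\ptr{M}{O_M\,\Upsilon_{FMH}}$ manifestly factorizes across $F|H$ for every $O_M$.

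The main obstacle is bridging the linearity of the CP cone with the genuinely nonlinear ``product'' requirement: one cannot simply linearly combine the factorized images $\Upsilon_F^{(x_M)}\otimes\Upsilon_H^{(x_M)}$. The two ingredients doing the work are (i) that the PSD operators on $M$ furnish not merely a spanning set but an \emph{open} cone of coefficient vectors, so that the product constraint holds on an open piece of $V$ --- and hence, by polynomial vanishing, on all of $V$ --- and (ii) the structural lemma on totally decomposable subspaces of a tensor product. Verifying (i) carefully (in particular that the image of the PSD cone has nonempty interior) and giving a self-contained proof of (ii) are the technical heart; the rest is bookkeeping with the Choi representation.
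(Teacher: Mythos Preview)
Your argument is correct and shares the paper's underlying idea---use that CP maps on $M$ span the Hermitian operators, so the product condition of Eq.~\eqref{eq:qmarkovorderchoi} must persist under linear combination, forcing a common $F$- or $H$-factor---but the execution differs. The paper fixes an informationally-complete instrument $\{\Pi_M^{(x)}\}$, expands $\Upsilon_{FMH}=\sum_x \Upsilon_F^{(x)}\otimes\Delta_M^{(x)}\otimes\Upsilon_H^{(x)}$ in the dual basis $\{\Delta_M^{(x)}\}$ (so the $FH$-coefficients are already products by hypothesis), and then simply asserts that $\sum_x c_{x}\,\Upsilon_F^{(x)}\otimes\Upsilon_H^{(x)}$ is a product for nontrivial $\{c_x\}$ only if all $\Upsilon_F^{(x)}$ or all $\Upsilon_H^{(x)}$ coincide. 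You instead expand in a generic Hermitian basis on $M$, use the openness of the PSD cone together with a Zariski/polynomial-vanishing argument to pass from ``product on the CP cone'' to ``product on the whole real span $V$'', and then invoke the structural lemma that a linear subspace of $X\otimes Y$ consisting entirely of simple tensors must sit inside $x_0\otimes Y$ or $X\otimes y_0$. The paper's route is shorter but leaves that key nonlinear step as an assertion; your route makes it rigorous and self-contained, at the price of a little algebraic geometry. Both reach the same conclusion and neither needs the causality constraints of the process tensor at this stage.
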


Choose a linearly independent, informationally-complete set of projectors $\mathcal{J}=\{ \Pi_M^{(x)} \}$ as the instrument on $M$. Any linearly independent set has an associated dual set of operators $\{ \Delta_M^{(y)} \}$ such that $\tr{\Pi_M^{(x)} \Delta_M^{(y)}} = \delta_{xy} \; \forall \, x, y$~\cite{Modi2012A}. Thus, we can write any tripartite state satisfying Eq.~(\ref{eq:qmarkovorderchoi}) for each measurement outcome as follows:
\begin{align}
    \Upsilon_{FMH} = \sum_{x} \Upsilon_F^{(x)} \otimes \Delta_M^{(x)} \otimes \Upsilon_H^{(x)}.
\end{align}
Now, consider a different instrument comprising a set of projectors defined via a linear expansion of the original set $\mathcal{J}'=\{\Gamma_M^{(z)} := \sum_{x} c_{xz} \Pi_M^{(x)}\}$, with $\{c_{xz}\}$ some non-trivial coefficients. The conditional process tensor upon application of this instrument is: 
\begin{align}
    \Upsilon_{FH}^{(z)} =& \ptr{M}{\Gamma_M^{(z)} \Upsilon_{FMH}} \\
    =& \sum_{x} \Upsilon_F^{(x)} \otimes \Upsilon_H^{(x)} \, \tr{\Gamma_M^{(z)} \Delta_M^{(x)}} \notag\\
    =&\sum_{x} c_{xz} \Upsilon_F^{(x)} \otimes \Upsilon_H^{(x)}. \notag
\end{align}
This gives a conditional product state iff either $c_{xz} = \delta_{xz} \; \forall \, x, z$, which is false by construction; or, either $\Upsilon_F^{(x)}$ or $\Upsilon_H^{(x)}$ (or both) are independent of $x$. Since the original choice of linearly independent projectors was arbitrary, and we can construct any CP map from a linear combination of such elements, the proof holds for arbitrary instruments on $M$. The only remaining way to satisfy Eq.~(\ref{eq:qmarkovorderchoi}) is if either the $F$ or $H$ (or both) parts of the process tensor are in tensor product with the part on $M$.

The proof of Theorem~\ref{thm:nogo} is immediate from Lemma~\ref{lem:nogo}, once we consider the fact that the Markov order condition must hold for any block $M$ of length $\ell$. Explicitly, consider first the the block $M$ to begin at timestep $t_{k-\ell}$ and end at timestep $t_{k-1}$. Without loss of generality, suppose that, by Lemma~\ref{lem:nogo}, the process tensor factorizes into the product $\Upsilon_{n:0} = \Upsilon_{n:k-\ell} \otimes \Upsilon_{k-\ell:0}$. Had we chosen the block $M$ to begin one timestep later, the same condition leads to the product $\Upsilon_{n:0} = \Upsilon_{n:k-\ell+1} \otimes \Upsilon_{k-\ell+1:0}$. The only way for a single process to satisfy both of these conditions is if there is a CPTP channel $\Lambda_{k-\ell+1:k-\ell}$ taking the input to the process at timestep $t_{k-\ell}$ to the output at the next timestep $t_{k-\ell+1}$: $\Upsilon_{n:0} = \Upsilon_{n:k-\ell+1} \otimes \Lambda_{k-\ell+1:k-\ell} \otimes \Upsilon_{k-\ell:0}$. Repeating this argument for all timesteps of the process immediately leads to the following Markovian (product) process tensor structure: 
\begin{align}\label{eq:markovianstructure}
    \Upsilon_{n:0}^{\textup{Markov}} &= \bigotimes_{k=1}^{n} \Lambda_{k:k-1} \otimes \rho_{0}.
\end{align} 
where $\rho_{0}$ is the average initial state of the system~\cite{Pollock2018L}. \\

\section{Process with Non-Vanishing Quantum CMI.}\label{app:example}

Consider the following three-step process on a qubit system, where Alice and Bob have access to the first and second steps respectively, and the final output state is accessible to Charlie (depicted in Fig.~\ref{fig:sicpovm}). Initially, the following tripartite state is constructed: 
\begin{align}
    \rho_{ABC} = \sum_b \frac{1}{4} \rho_{A}^{(b)} \otimes \Delta_B^{(b)} \otimes \rho_C^{(b)},
\end{align}
where, for each value of $b = \{ 1, 2, 3, 4 \}$, $\Delta_B^{(b)} := \tfrac{1}{2} (\mathbbm{1} + \sqrt{3} \sum_i c_i^{(b)} \sigma_i) $ is defined in terms of Pauli matrices $\{\sigma_i\}$ with tetrahedral coefficient vectors $\{c^{(b)}\} = \{ (1,1,1), (1,-1,-1), (-1,1,-1), (-1,-1,1) \}$. These objects forms the dual set to the POVM $\mathcal{J}_B$, comprising elements $\Pi^{(b)}_B := \tfrac{1}{4} (\mathbbm{1} + \tfrac{1}{\sqrt{3}} \sum_i c_i^{(b)} \sigma_i)$. We then define the states $\rho_{X}^{(b)} = \tfrac{3}{8}  \mathbbm{1} + \tfrac{1}{2} \Pi^{(b)}$, with $X = \{ A, C\}$, in terms of these POVM elements, before finally normalizing the overall state. The process is such that the $A$ part of this state is first given to Alice, who can make any operation that she likes. After this, Alice's output is discarded and the $B$ part of the overall state is given to Bob, who can make any operation that he likes. Finally, Bob's output is discarded, and the $C$ part of the state described above is given to Charlie. The process tensor is thus: $\Upsilon_{ABC} = \rho_{ABC}^\inp \otimes \mathbbm{1}_{AB}^\out$.

Suppose Bob chooses to measure the POVM $ \Pi_B $ as his instrument. Then, Eq.~(\ref{eq:qmarkovorderchoi}) holds for each outcome and Alice's conditional state is independent of Charlie's. However, if he chooses any other instrument, Alice and Charlie's states remain correlated (at least for some outcomes). Thus, with respect to the measurement instrument $\mathcal{J}_B$, the process has Markov order~$1$, whereas it has larger Markov order for a generic instrument. Importantly, the POVM elements of Bob's measurement are non-orthogonal, so this instrument has no classical counterpart. Lastly, the quantum CMI of the process tensor does not vanish: $I(A:C|B) \approx 0.059$. Nonetheless, knowing Bob's measurement outcome with respect to the POVM $\mathcal{J}_B$ allows us to reconstruct the entire $ABC$ state and therefore unambiguously describe the process.


\begin{figure}[t]
\centering
\includegraphics[width=\linewidth]{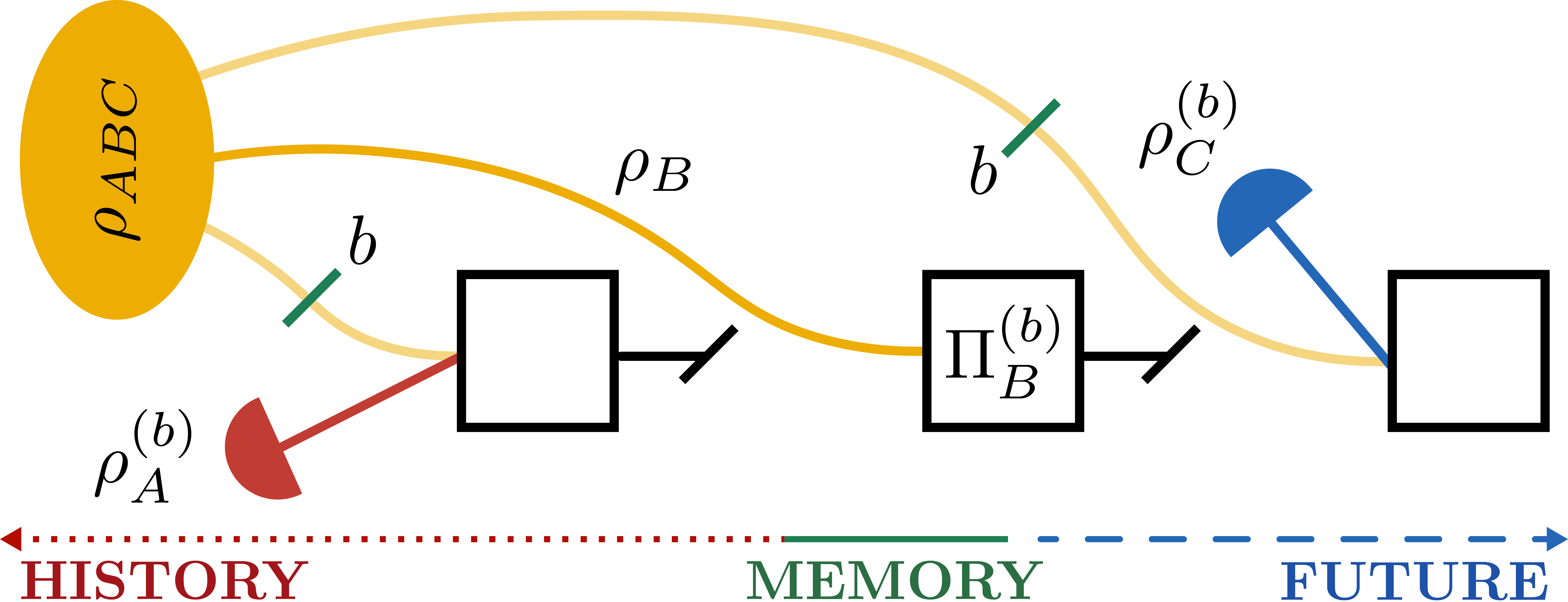}
\caption{\textit{Process with finite instrument-specific quantum Markov order but non-vanishing quantum CMI.} The process is as described in the text. Temporally, we trivialize the output spaces, so what Alice receives denotes the history (red, dotted); what Bob receives denotes the memory (green, solid); and what Charlie receives denotes the future (blue, dashed). For arbitrary instruments of Bob's choosing, Alice and Charlie's states are, in general, correlated; except for when Bob measures with the specific POVM $\mathcal{J}_B = \{ \Pi_B^{(b)}\}$. In this case, measurement of any outcome $b$ has the effect of breaking the correlations between Alice and Charlie's subsystems, rendering them in the conditional product state $\rho_{AC}^{(b)} = \rho_A^{(b)} \otimes \rho_C^{(b)}$. Importantly, Bob's instrument is distinct from anything he could implement classically and $I(A:C|B) \neq 0$. \label{fig:sicpovm}}
\end{figure}


\begin{thebibliography}{67}%
\makeatletter
\providecommand \@ifxundefined [1]{%
 \@ifx{#1\undefined}
}%
\providecommand \@ifnum [1]{%
 \ifnum #1\expandafter \@firstoftwo
 \else \expandafter \@secondoftwo
 \fi
}%
\providecommand \@ifx [1]{%
 \ifx #1\expandafter \@firstoftwo
 \else \expandafter \@secondoftwo
 \fi
}%
\providecommand \natexlab [1]{#1}%
\providecommand \enquote  [1]{``#1''}%
\providecommand \bibnamefont  [1]{#1}%
\providecommand \bibfnamefont [1]{#1}%
\providecommand \citenamefont [1]{#1}%
\providecommand \href@noop [0]{\@secondoftwo}%
\providecommand \href [0]{\begingroup \@sanitize@url \@href}%
\providecommand \@href[1]{\@@startlink{#1}\@@href}%
\providecommand \@@href[1]{\endgroup#1\@@endlink}%
\providecommand \@sanitize@url [0]{\catcode `\\12\catcode `\$12\catcode
  `\&12\catcode `\#12\catcode `\^12\catcode `\_12\catcode `\%12\relax}%
\providecommand \@@startlink[1]{}%
\providecommand \@@endlink[0]{}%
\providecommand \url  [0]{\begingroup\@sanitize@url \@url }%
\providecommand \@url [1]{\endgroup\@href {#1}{\urlprefix }}%
\providecommand \urlprefix  [0]{URL }%
\providecommand \Eprint [0]{\href }%
\providecommand \doibase [0]{http://dx.doi.org/}%
\providecommand \selectlanguage [0]{\@gobble}%
\providecommand \bibinfo  [0]{\@secondoftwo}%
\providecommand \bibfield  [0]{\@secondoftwo}%
\providecommand \translation [1]{[#1]}%
\providecommand \BibitemOpen [0]{}%
\providecommand \bibitemStop [0]{}%
\providecommand \bibitemNoStop [0]{.\EOS\space}%
\providecommand \EOS [0]{\spacefactor3000\relax}%
\providecommand \BibitemShut  [1]{\csname bibitem#1\endcsname}%
\let\auto@bib@innerbib\@empty
\bibitem [{\citenamefont {Van~Kampen}(2011)}]{StochProc}%
  \BibitemOpen
  \bibfield  {author} {\bibinfo {author} {\bibfnamefont {N.}~\bibnamefont
  {Van~Kampen}},\ }\href@noop {} {\emph {\bibinfo {title} {{Stochastic
  Processes in Physics and Chemistry}}}}\ (\bibinfo  {publisher} {Elsevier, New
  York},\ \bibinfo {year} {2011})\BibitemShut {NoStop}%
\bibitem [{\citenamefont {Breuer}\ and\ \citenamefont
  {Petruccione}(2002)}]{BreuerPetruccione}%
  \BibitemOpen
  \bibfield  {author} {\bibinfo {author} {\bibfnamefont {H.-P.}\ \bibnamefont
  {Breuer}}\ and\ \bibinfo {author} {\bibfnamefont {F.}~\bibnamefont
  {Petruccione}},\ }\href@noop {} {\emph {\bibinfo {title} {{The Theory of Open
  Quantum Systems}}}}\ (\bibinfo  {publisher} {Oxford University Press, New
  York},\ \bibinfo {year} {2002})\BibitemShut {NoStop}%
\bibitem [{\citenamefont {Salzberg}\ \emph {et~al.}(1998)\citenamefont
  {Salzberg}, \citenamefont {Delcher}, \citenamefont {Kasif},\ and\
  \citenamefont {White}}]{Salzberg1998}%
  \BibitemOpen
  \bibfield  {author} {\bibinfo {author} {\bibfnamefont {S.~L.}\ \bibnamefont
  {Salzberg}}, \bibinfo {author} {\bibfnamefont {A.~L.}\ \bibnamefont
  {Delcher}}, \bibinfo {author} {\bibfnamefont {S.}~\bibnamefont {Kasif}}, \
  and\ \bibinfo {author} {\bibfnamefont {O.}~\bibnamefont {White}},\ }\href
  {\doibase 10.1093/nar/26.2.544} {\bibfield  {journal} {\bibinfo  {journal}
  {Nucleic Acids Res.}\ }\textbf {\bibinfo {volume} {26}},\ \bibinfo {pages}
  {544} (\bibinfo {year} {1998})}\BibitemShut {NoStop}%
\bibitem [{\citenamefont {Thijs}\ \emph {et~al.}(2001)\citenamefont {Thijs},
  \citenamefont {Lescot}, \citenamefont {Marchal}, \citenamefont {Rombauts},
  \citenamefont {De~Moor}, \citenamefont {Rouze},\ and\ \citenamefont
  {Moreau}}]{Thijs2001}%
  \BibitemOpen
  \bibfield  {author} {\bibinfo {author} {\bibfnamefont {G.}~\bibnamefont
  {Thijs}}, \bibinfo {author} {\bibfnamefont {M.}~\bibnamefont {Lescot}},
  \bibinfo {author} {\bibfnamefont {K.}~\bibnamefont {Marchal}}, \bibinfo
  {author} {\bibfnamefont {S.}~\bibnamefont {Rombauts}}, \bibinfo {author}
  {\bibfnamefont {B.}~\bibnamefont {De~Moor}}, \bibinfo {author} {\bibfnamefont
  {P.}~\bibnamefont {Rouze}}, \ and\ \bibinfo {author} {\bibfnamefont
  {Y.}~\bibnamefont {Moreau}},\ }\href {\doibase
  10.1093/bioinformatics/17.12.1113} {\bibfield  {journal} {\bibinfo  {journal}
  {Bioinformatics}\ }\textbf {\bibinfo {volume} {17}},\ \bibinfo {pages} {1113}
  (\bibinfo {year} {2001})}\BibitemShut {NoStop}%
\bibitem [{\citenamefont {Rosvall}\ \emph {et~al.}(2014)\citenamefont
  {Rosvall}, \citenamefont {Esquivel}, \citenamefont {Lancichinetti},
  \citenamefont {West},\ and\ \citenamefont {Lambiotte}}]{Rosvall2014}%
  \BibitemOpen
  \bibfield  {author} {\bibinfo {author} {\bibfnamefont {M.}~\bibnamefont
  {Rosvall}}, \bibinfo {author} {\bibfnamefont {A.~V.}\ \bibnamefont
  {Esquivel}}, \bibinfo {author} {\bibfnamefont {A.}~\bibnamefont
  {Lancichinetti}}, \bibinfo {author} {\bibfnamefont {J.~D.}\ \bibnamefont
  {West}}, \ and\ \bibinfo {author} {\bibfnamefont {R.}~\bibnamefont
  {Lambiotte}},\ }\href {\doibase 10.1038/ncomms5630} {\bibfield  {journal}
  {\bibinfo  {journal} {Nat. Commun.}\ }\textbf {\bibinfo {volume} {5}},\
  \bibinfo {pages} {4630} (\bibinfo {year} {2014})}\BibitemShut {NoStop}%
\bibitem [{\citenamefont {Pollock}\ and\ \citenamefont
  {Modi}(2018{\natexlab{a}})}]{Pollock2018T}%
  \BibitemOpen
  \bibfield  {author} {\bibinfo {author} {\bibfnamefont {F.~A.}\ \bibnamefont
  {Pollock}}\ and\ \bibinfo {author} {\bibfnamefont {K.}~\bibnamefont {Modi}},\
  }\href {\doibase 10.22331/q-2018-07-11-76} {\bibfield  {journal} {\bibinfo
  {journal} {Quantum}\ }\textbf {\bibinfo {volume} {2}},\ \bibinfo {pages} {76}
  (\bibinfo {year} {2018}{\natexlab{a}})}\BibitemShut {NoStop}%
\bibitem [{\citenamefont {Nielsen}\ and\ \citenamefont
  {Chuang}(2000)}]{NielsenChuang}%
  \BibitemOpen
  \bibfield  {author} {\bibinfo {author} {\bibfnamefont {M.}~\bibnamefont
  {Nielsen}}\ and\ \bibinfo {author} {\bibfnamefont {I.}~\bibnamefont
  {Chuang}},\ }\href@noop {} {\emph {\bibinfo {title} {{Quantum Computation and
  Quantum Information}}}}\ (\bibinfo  {publisher} {Cambridge University Press,
  Cambridge, England},\ \bibinfo {year} {2000})\BibitemShut {NoStop}%
\bibitem [{\citenamefont {Viola}\ \emph {et~al.}(1999)\citenamefont {Viola},
  \citenamefont {Knill},\ and\ \citenamefont {Lloyd}}]{Viola1999}%
  \BibitemOpen
  \bibfield  {author} {\bibinfo {author} {\bibfnamefont {L.}~\bibnamefont
  {Viola}}, \bibinfo {author} {\bibfnamefont {E.}~\bibnamefont {Knill}}, \ and\
  \bibinfo {author} {\bibfnamefont {S.}~\bibnamefont {Lloyd}},\ }\href
  {\doibase 10.1103/PhysRevLett.82.2417} {\bibfield  {journal} {\bibinfo
  {journal} {Phys. Rev. Lett.}\ }\textbf {\bibinfo {volume} {82}},\ \bibinfo
  {pages} {2417} (\bibinfo {year} {1999})}\BibitemShut {NoStop}%
\bibitem [{\citenamefont {Banaszek}\ \emph {et~al.}(2004)\citenamefont
  {Banaszek}, \citenamefont {Dragan}, \citenamefont {Wasilewski},\ and\
  \citenamefont {Radzewicz}}]{Banaszek2004}%
  \BibitemOpen
  \bibfield  {author} {\bibinfo {author} {\bibfnamefont {K.}~\bibnamefont
  {Banaszek}}, \bibinfo {author} {\bibfnamefont {A.}~\bibnamefont {Dragan}},
  \bibinfo {author} {\bibfnamefont {W.}~\bibnamefont {Wasilewski}}, \ and\
  \bibinfo {author} {\bibfnamefont {C.}~\bibnamefont {Radzewicz}},\ }\href
  {\doibase 10.1103/PhysRevLett.92.257901} {\bibfield  {journal} {\bibinfo
  {journal} {Phys. Rev. Lett.}\ }\textbf {\bibinfo {volume} {92}},\ \bibinfo
  {pages} {257901} (\bibinfo {year} {2004})}\BibitemShut {NoStop}%
\bibitem [{\citenamefont {Erez}\ \emph {et~al.}(2008)\citenamefont {Erez},
  \citenamefont {Gordon}, \citenamefont {Nest},\ and\ \citenamefont
  {Kurizki}}]{Erez2008}%
  \BibitemOpen
  \bibfield  {author} {\bibinfo {author} {\bibfnamefont {N.}~\bibnamefont
  {Erez}}, \bibinfo {author} {\bibfnamefont {G.}~\bibnamefont {Gordon}},
  \bibinfo {author} {\bibfnamefont {M.}~\bibnamefont {Nest}}, \ and\ \bibinfo
  {author} {\bibfnamefont {G.}~\bibnamefont {Kurizki}},\ }\href {\doibase
  10.1038/nature06873} {\bibfield  {journal} {\bibinfo  {journal} {Nature
  (London)}\ }\textbf {\bibinfo {volume} {452}},\ \bibinfo {pages} {724}
  (\bibinfo {year} {2008})}\BibitemShut {NoStop}%
\bibitem [{\citenamefont {Reich}\ \emph {et~al.}(2015)\citenamefont {Reich},
  \citenamefont {Katz},\ and\ \citenamefont {Koch}}]{Reich2015}%
  \BibitemOpen
  \bibfield  {author} {\bibinfo {author} {\bibfnamefont {D.~M.}\ \bibnamefont
  {Reich}}, \bibinfo {author} {\bibfnamefont {N.}~\bibnamefont {Katz}}, \ and\
  \bibinfo {author} {\bibfnamefont {C.~P.}\ \bibnamefont {Koch}},\ }\href
  {\doibase 10.1038/srep12430} {\bibfield  {journal} {\bibinfo  {journal} {Sci.
  Rep.}\ }\textbf {\bibinfo {volume} {5}},\ \bibinfo {pages} {12430} (\bibinfo
  {year} {2015})}\BibitemShut {NoStop}%
\bibitem [{\citenamefont {Modi}(2011)}]{Modi2011}%
  \BibitemOpen
  \bibfield  {author} {\bibinfo {author} {\bibfnamefont {K.}~\bibnamefont
  {Modi}},\ }\href {\doibase 10.1142/S1230161211000170} {\bibfield  {journal}
  {\bibinfo  {journal} {Open Sys. Info. Dyn.}\ }\textbf {\bibinfo {volume}
  {18}},\ \bibinfo {pages} {253} (\bibinfo {year} {2011})}\BibitemShut
  {NoStop}%
\bibitem [{\citenamefont {Modi}(2012)}]{Modi2012}%
  \BibitemOpen
  \bibfield  {author} {\bibinfo {author} {\bibfnamefont {K.}~\bibnamefont
  {Modi}},\ }\href {\doibase 10.1038/srep00581} {\bibfield  {journal} {\bibinfo
   {journal} {Sci. Rep.}\ }\textbf {\bibinfo {volume} {2}},\ \bibinfo {pages}
  {581} (\bibinfo {year} {2012})}\BibitemShut {NoStop}%
\bibitem [{\citenamefont {Modi}\ \emph {et~al.}(2012)\citenamefont {Modi},
  \citenamefont {Rodr\'{\i}guez-Rosario},\ and\ \citenamefont
  {Aspuru-Guzik}}]{Modi2012A}%
  \BibitemOpen
  \bibfield  {author} {\bibinfo {author} {\bibfnamefont {K.}~\bibnamefont
  {Modi}}, \bibinfo {author} {\bibfnamefont {C.~A.}\ \bibnamefont
  {Rodr\'{\i}guez-Rosario}}, \ and\ \bibinfo {author} {\bibfnamefont
  {A.}~\bibnamefont {Aspuru-Guzik}},\ }\href {\doibase
  10.1103/PhysRevA.86.064102} {\bibfield  {journal} {\bibinfo  {journal} {Phys.
  Rev. A}\ }\textbf {\bibinfo {volume} {86}},\ \bibinfo {pages} {064102}
  (\bibinfo {year} {2012})}\BibitemShut {NoStop}%
\bibitem [{\citenamefont {Milz}\ \emph
  {et~al.}(2017{\natexlab{a}})\citenamefont {Milz}, \citenamefont {Pollock},\
  and\ \citenamefont {Modi}}]{Milz2017}%
  \BibitemOpen
  \bibfield  {author} {\bibinfo {author} {\bibfnamefont {S.}~\bibnamefont
  {Milz}}, \bibinfo {author} {\bibfnamefont {F.~A.}\ \bibnamefont {Pollock}}, \
  and\ \bibinfo {author} {\bibfnamefont {K.}~\bibnamefont {Modi}},\ }\href
  {\doibase 10.1142/S1230161217400169} {\bibfield  {journal} {\bibinfo
  {journal} {Open Syst. Inf. Dyn.}\ }\textbf {\bibinfo {volume} {24}},\
  \bibinfo {pages} {1740016} (\bibinfo {year}
  {2017}{\natexlab{a}})}\BibitemShut {NoStop}%
\bibitem [{\citenamefont {Kolmogorov}(1956)}]{Kolmogorov}%
  \BibitemOpen
  \bibfield  {author} {\bibinfo {author} {\bibfnamefont {A.~N.}\ \bibnamefont
  {Kolmogorov}},\ }\href@noop {} {\emph {\bibinfo {title} {Foundations of the
  Theory of Probability \emph{[Grundbegriffe der
  Wahrscheinlichkeitsrechnung]}}}}\ (\bibinfo  {publisher} {Chelsea, New
  York},\ \bibinfo {year} {1956})\BibitemShut {NoStop}%
\bibitem [{\citenamefont {Feller}(1971)}]{Feller}%
  \BibitemOpen
  \bibfield  {author} {\bibinfo {author} {\bibfnamefont {W.}~\bibnamefont
  {Feller}},\ }\href@noop {} {\emph {\bibinfo {title} {{An Introduction to
  Probability Theory and Its Applications}}}}\ (\bibinfo  {publisher} {Wiley,
  New York},\ \bibinfo {year} {1971})\BibitemShut {NoStop}%
\bibitem [{\citenamefont {Milz}\ \emph
  {et~al.}(2017{\natexlab{b}})\citenamefont {Milz}, \citenamefont {Sakuldee},
  \citenamefont {Pollock},\ and\ \citenamefont {Modi}}]{Milz2017KET}%
  \BibitemOpen
  \bibfield  {author} {\bibinfo {author} {\bibfnamefont {S.}~\bibnamefont
  {Milz}}, \bibinfo {author} {\bibfnamefont {F.}~\bibnamefont {Sakuldee}},
  \bibinfo {author} {\bibfnamefont {F.~A.}\ \bibnamefont {Pollock}}, \ and\
  \bibinfo {author} {\bibfnamefont {K.}~\bibnamefont {Modi}},\ }\href
  {https://arxiv.org/abs/1712.02589} {\bibfield  {journal} {\bibinfo  {journal}
  {arXiv:1712.02589}\ } (\bibinfo {year} {2017}{\natexlab{b}})}\BibitemShut
  {NoStop}%
\bibitem [{\citenamefont {Breuer}\ \emph {et~al.}(2016)\citenamefont {Breuer},
  \citenamefont {Laine}, \citenamefont {Piilo},\ and\ \citenamefont
  {Vacchini}}]{Breuer2016}%
  \BibitemOpen
  \bibfield  {author} {\bibinfo {author} {\bibfnamefont {H.-P.}\ \bibnamefont
  {Breuer}}, \bibinfo {author} {\bibfnamefont {E.-M.}\ \bibnamefont {Laine}},
  \bibinfo {author} {\bibfnamefont {J.}~\bibnamefont {Piilo}}, \ and\ \bibinfo
  {author} {\bibfnamefont {B.}~\bibnamefont {Vacchini}},\ }\href {\doibase
  10.1103/RevModPhys.88.021002} {\bibfield  {journal} {\bibinfo  {journal}
  {Rev. Mod. Phys.}\ }\textbf {\bibinfo {volume} {88}},\ \bibinfo {pages}
  {021002} (\bibinfo {year} {2016})}\BibitemShut {NoStop}%
\bibitem [{\citenamefont {Giovannetti}\ and\ \citenamefont
  {Palma}(2012)}]{Giovannetti2012}%
  \BibitemOpen
  \bibfield  {author} {\bibinfo {author} {\bibfnamefont {V.}~\bibnamefont
  {Giovannetti}}\ and\ \bibinfo {author} {\bibfnamefont {G.~M.}\ \bibnamefont
  {Palma}},\ }\href {\doibase 10.1103/PhysRevLett.108.040401} {\bibfield
  {journal} {\bibinfo  {journal} {Phys. Rev. Lett.}\ }\textbf {\bibinfo
  {volume} {108}},\ \bibinfo {pages} {040401} (\bibinfo {year}
  {2012})}\BibitemShut {NoStop}%
\bibitem [{\citenamefont {Lorenzo}\ \emph
  {et~al.}(2017{\natexlab{a}})\citenamefont {Lorenzo}, \citenamefont
  {Ciccarello},\ and\ \citenamefont {Palma}}]{Lorenzo2017A}%
  \BibitemOpen
  \bibfield  {author} {\bibinfo {author} {\bibfnamefont {S.}~\bibnamefont
  {Lorenzo}}, \bibinfo {author} {\bibfnamefont {F.}~\bibnamefont {Ciccarello}},
  \ and\ \bibinfo {author} {\bibfnamefont {G.~M.}\ \bibnamefont {Palma}},\
  }\href {\doibase 10.1103/PhysRevA.96.032107} {\bibfield  {journal} {\bibinfo
  {journal} {Phys. Rev. A}\ }\textbf {\bibinfo {volume} {96}},\ \bibinfo
  {pages} {032107} (\bibinfo {year} {2017}{\natexlab{a}})}\BibitemShut
  {NoStop}%
\bibitem [{\citenamefont {Lorenzo}\ \emph
  {et~al.}(2017{\natexlab{b}})\citenamefont {Lorenzo}, \citenamefont
  {Ciccarello}, \citenamefont {Palma},\ and\ \citenamefont
  {Vacchini}}]{Lorenzo2017}%
  \BibitemOpen
  \bibfield  {author} {\bibinfo {author} {\bibfnamefont {S.}~\bibnamefont
  {Lorenzo}}, \bibinfo {author} {\bibfnamefont {F.}~\bibnamefont {Ciccarello}},
  \bibinfo {author} {\bibfnamefont {G.~M.}\ \bibnamefont {Palma}}, \ and\
  \bibinfo {author} {\bibfnamefont {B.}~\bibnamefont {Vacchini}},\ }\href
  {\doibase 10.1142/S123016121740011X} {\bibfield  {journal} {\bibinfo
  {journal} {Open Syst. Inf. Dyn.}\ }\textbf {\bibinfo {volume} {24}},\
  \bibinfo {pages} {1740011} (\bibinfo {year}
  {2017}{\natexlab{b}})}\BibitemShut {NoStop}%
\bibitem [{\citenamefont {Li}\ \emph {et~al.}(2018)\citenamefont {Li},
  \citenamefont {Hall},\ and\ \citenamefont {Wiseman}}]{Li2018}%
  \BibitemOpen
  \bibfield  {author} {\bibinfo {author} {\bibfnamefont {L.}~\bibnamefont
  {Li}}, \bibinfo {author} {\bibfnamefont {M.~J.}\ \bibnamefont {Hall}}, \ and\
  \bibinfo {author} {\bibfnamefont {H.~M.}\ \bibnamefont {Wiseman}},\ }\href
  {http://www.sciencedirect.com/science/article/pii/S0370157318301601}
  {\bibfield  {journal} {\bibinfo  {journal} {Phys. Rep.}\ }\textbf {\bibinfo
  {volume} {759}} (\bibinfo {year} {2018})}\BibitemShut {NoStop}%
\bibitem [{\citenamefont {Oreshkov}\ \emph {et~al.}(2012)\citenamefont
  {Oreshkov}, \citenamefont {Costa},\ and\ \citenamefont
  {Brukner}}]{Oreshkov2012}%
  \BibitemOpen
  \bibfield  {author} {\bibinfo {author} {\bibfnamefont {O.}~\bibnamefont
  {Oreshkov}}, \bibinfo {author} {\bibfnamefont {F.}~\bibnamefont {Costa}}, \
  and\ \bibinfo {author} {\bibfnamefont {{\v C}.}~\bibnamefont {Brukner}},\
  }\href {\doibase 10.1038/ncomms2076} {\bibfield  {journal} {\bibinfo
  {journal} {Nat. Commun.}\ }\textbf {\bibinfo {volume} {3}},\ \bibinfo {pages}
  {1092} (\bibinfo {year} {2012})}\BibitemShut {NoStop}%
\bibitem [{\citenamefont {Pollock}\ \emph
  {et~al.}(2018{\natexlab{a}})\citenamefont {Pollock}, \citenamefont
  {Rodr\'{\i}guez-Rosario}, \citenamefont {Frauenheim}, \citenamefont
  {Paternostro},\ and\ \citenamefont {Modi}}]{Pollock2018A}%
  \BibitemOpen
  \bibfield  {author} {\bibinfo {author} {\bibfnamefont {F.~A.}\ \bibnamefont
  {Pollock}}, \bibinfo {author} {\bibfnamefont {C.}~\bibnamefont
  {Rodr\'{\i}guez-Rosario}}, \bibinfo {author} {\bibfnamefont {T.}~\bibnamefont
  {Frauenheim}}, \bibinfo {author} {\bibfnamefont {M.}~\bibnamefont
  {Paternostro}}, \ and\ \bibinfo {author} {\bibfnamefont {K.}~\bibnamefont
  {Modi}},\ }\href {\doibase 10.1103/PhysRevA.97.012127} {\bibfield  {journal}
  {\bibinfo  {journal} {Phys. Rev. A}\ }\textbf {\bibinfo {volume} {97}},\
  \bibinfo {pages} {012127} (\bibinfo {year} {2018}{\natexlab{a}})}\BibitemShut
  {NoStop}%
\bibitem [{\citenamefont {Pollock}\ \emph
  {et~al.}(2018{\natexlab{b}})\citenamefont {Pollock}, \citenamefont
  {Rodr\'{\i}guez-Rosario}, \citenamefont {Frauenheim}, \citenamefont
  {Paternostro},\ and\ \citenamefont {Modi}}]{Pollock2018L}%
  \BibitemOpen
  \bibfield  {author} {\bibinfo {author} {\bibfnamefont {F.~A.}\ \bibnamefont
  {Pollock}}, \bibinfo {author} {\bibfnamefont {C.}~\bibnamefont
  {Rodr\'{\i}guez-Rosario}}, \bibinfo {author} {\bibfnamefont {T.}~\bibnamefont
  {Frauenheim}}, \bibinfo {author} {\bibfnamefont {M.}~\bibnamefont
  {Paternostro}}, \ and\ \bibinfo {author} {\bibfnamefont {K.}~\bibnamefont
  {Modi}},\ }\href {\doibase 10.1103/PhysRevLett.120.040405} {\bibfield
  {journal} {\bibinfo  {journal} {Phys. Rev. Lett.}\ }\textbf {\bibinfo
  {volume} {120}},\ \bibinfo {pages} {040405} (\bibinfo {year}
  {2018}{\natexlab{b}})}\BibitemShut {NoStop}%
\bibitem [{\citenamefont {Chiribella}\ \emph
  {et~al.}(2008{\natexlab{a}})\citenamefont {Chiribella}, \citenamefont
  {D'Ariano},\ and\ \citenamefont {Perinotti}}]{Chiribella2008}%
  \BibitemOpen
  \bibfield  {author} {\bibinfo {author} {\bibfnamefont {G.}~\bibnamefont
  {Chiribella}}, \bibinfo {author} {\bibfnamefont {G.~M.}\ \bibnamefont
  {D'Ariano}}, \ and\ \bibinfo {author} {\bibfnamefont {P.}~\bibnamefont
  {Perinotti}},\ }\href {\doibase 10.1209/0295-5075/83/30004} {\bibfield
  {journal} {\bibinfo  {journal} {Europhys. Lett.}\ }\textbf {\bibinfo {volume}
  {83}},\ \bibinfo {pages} {30004} (\bibinfo {year}
  {2008}{\natexlab{a}})}\BibitemShut {NoStop}%
\bibitem [{\citenamefont {Chiribella}\ \emph
  {et~al.}(2008{\natexlab{b}})\citenamefont {Chiribella}, \citenamefont
  {D'Ariano},\ and\ \citenamefont {Perinotti}}]{Chiribella2008-2}%
  \BibitemOpen
  \bibfield  {author} {\bibinfo {author} {\bibfnamefont {G.}~\bibnamefont
  {Chiribella}}, \bibinfo {author} {\bibfnamefont {G.~M.}\ \bibnamefont
  {D'Ariano}}, \ and\ \bibinfo {author} {\bibfnamefont {P.}~\bibnamefont
  {Perinotti}},\ }\href {\doibase 10.1103/PhysRevLett.101.060401} {\bibfield
  {journal} {\bibinfo  {journal} {Phys. Rev. Lett.}\ }\textbf {\bibinfo
  {volume} {101}},\ \bibinfo {pages} {060401} (\bibinfo {year}
  {2008}{\natexlab{b}})}\BibitemShut {NoStop}%
\bibitem [{\citenamefont {Chiribella}\ \emph {et~al.}(2009)\citenamefont
  {Chiribella}, \citenamefont {D'Ariano},\ and\ \citenamefont
  {Perinotti}}]{Chiribella2009}%
  \BibitemOpen
  \bibfield  {author} {\bibinfo {author} {\bibfnamefont {G.}~\bibnamefont
  {Chiribella}}, \bibinfo {author} {\bibfnamefont {G.~M.}\ \bibnamefont
  {D'Ariano}}, \ and\ \bibinfo {author} {\bibfnamefont {P.}~\bibnamefont
  {Perinotti}},\ }\href {\doibase 10.1103/PhysRevA.80.022339} {\bibfield
  {journal} {\bibinfo  {journal} {Phys. Rev. A}\ }\textbf {\bibinfo {volume}
  {80}},\ \bibinfo {pages} {022339} (\bibinfo {year} {2009})}\BibitemShut
  {NoStop}%
\bibitem [{\citenamefont {Shrapnel}\ \emph {et~al.}(2017)\citenamefont
  {Shrapnel}, \citenamefont {Costa},\ and\ \citenamefont
  {Milburn}}]{ShrapnelCosta2017}%
  \BibitemOpen
  \bibfield  {author} {\bibinfo {author} {\bibfnamefont {S.}~\bibnamefont
  {Shrapnel}}, \bibinfo {author} {\bibfnamefont {F.}~\bibnamefont {Costa}}, \
  and\ \bibinfo {author} {\bibfnamefont {G.}~\bibnamefont {Milburn}},\ }\href
  {https://arxiv.org/abs/1702.01845} {\bibfield  {journal} {\bibinfo  {journal}
  {arXiv:1702.01845}\ } (\bibinfo {year} {2017})}\BibitemShut {NoStop}%
\bibitem [{\citenamefont {Pearl}(2000)}]{Pearl}%
  \BibitemOpen
  \bibfield  {author} {\bibinfo {author} {\bibfnamefont {J.}~\bibnamefont
  {Pearl}},\ }\href@noop {} {\emph {\bibinfo {title} {{Causality}}}}\ (\bibinfo
   {publisher} {Oxford University Press, New York},\ \bibinfo {year}
  {2000})\BibitemShut {NoStop}%
\bibitem [{\citenamefont {Ara{\'{u}}jo}\ \emph {et~al.}(2015)\citenamefont
  {Ara{\'{u}}jo}, \citenamefont {Branciard}, \citenamefont {Costa},
  \citenamefont {Feix}, \citenamefont {Giarmatzi},\ and\ \citenamefont
  {Brukner}}]{Araujo2015}%
  \BibitemOpen
  \bibfield  {author} {\bibinfo {author} {\bibfnamefont {M.}~\bibnamefont
  {Ara{\'{u}}jo}}, \bibinfo {author} {\bibfnamefont {C.}~\bibnamefont
  {Branciard}}, \bibinfo {author} {\bibfnamefont {F.}~\bibnamefont {Costa}},
  \bibinfo {author} {\bibfnamefont {A.}~\bibnamefont {Feix}}, \bibinfo {author}
  {\bibfnamefont {C.}~\bibnamefont {Giarmatzi}}, \ and\ \bibinfo {author}
  {\bibfnamefont {{\v{C}}.}~\bibnamefont {Brukner}},\ }\href {\doibase
  10.1088/1367-2630/17/10/102001} {\bibfield  {journal} {\bibinfo  {journal}
  {New J. Phys.}\ }\textbf {\bibinfo {volume} {17}},\ \bibinfo {pages} {102001}
  (\bibinfo {year} {2015})}\BibitemShut {NoStop}%
\bibitem [{\citenamefont {Costa}\ and\ \citenamefont
  {Shrapnel}(2016)}]{Costa2016}%
  \BibitemOpen
  \bibfield  {author} {\bibinfo {author} {\bibfnamefont {F.}~\bibnamefont
  {Costa}}\ and\ \bibinfo {author} {\bibfnamefont {S.}~\bibnamefont
  {Shrapnel}},\ }\href {\doibase 10.1088/1367-2630/18/6/063032} {\bibfield
  {journal} {\bibinfo  {journal} {New J. Phys.}\ }\textbf {\bibinfo {volume}
  {18}},\ \bibinfo {pages} {063032} (\bibinfo {year} {2016})}\BibitemShut
  {NoStop}%
\bibitem [{\citenamefont {Oreshkov}\ and\ \citenamefont
  {Giarmatzi}(2016)}]{Oreshkov2016}%
  \BibitemOpen
  \bibfield  {author} {\bibinfo {author} {\bibfnamefont {O.}~\bibnamefont
  {Oreshkov}}\ and\ \bibinfo {author} {\bibfnamefont {C.}~\bibnamefont
  {Giarmatzi}},\ }\href {\doibase 10.1088/1367-2630/18/9/093020} {\bibfield
  {journal} {\bibinfo  {journal} {New J. Phys.}\ }\textbf {\bibinfo {volume}
  {18}},\ \bibinfo {pages} {093020} (\bibinfo {year} {2016})}\BibitemShut
  {NoStop}%
\bibitem [{\citenamefont {Allen}\ \emph {et~al.}(2017)\citenamefont {Allen},
  \citenamefont {Barrett}, \citenamefont {Horsman}, \citenamefont {Lee},\ and\
  \citenamefont {Spekkens}}]{Allen2017}%
  \BibitemOpen
  \bibfield  {author} {\bibinfo {author} {\bibfnamefont {J.-M.~A.}\
  \bibnamefont {Allen}}, \bibinfo {author} {\bibfnamefont {J.}~\bibnamefont
  {Barrett}}, \bibinfo {author} {\bibfnamefont {D.~C.}\ \bibnamefont
  {Horsman}}, \bibinfo {author} {\bibfnamefont {C.~M.}\ \bibnamefont {Lee}}, \
  and\ \bibinfo {author} {\bibfnamefont {R.~W.}\ \bibnamefont {Spekkens}},\
  }\href {\doibase 10.1103/PhysRevX.7.031021} {\bibfield  {journal} {\bibinfo
  {journal} {Phys. Rev. X}\ }\textbf {\bibinfo {volume} {7}},\ \bibinfo {pages}
  {031021} (\bibinfo {year} {2017})}\BibitemShut {NoStop}%
\bibitem [{\citenamefont {Ringbauer}\ and\ \citenamefont
  {Chaves}(2017)}]{Ringbauer2017}%
  \BibitemOpen
  \bibfield  {author} {\bibinfo {author} {\bibfnamefont {M.}~\bibnamefont
  {Ringbauer}}\ and\ \bibinfo {author} {\bibfnamefont {R.}~\bibnamefont
  {Chaves}},\ }\href {\doibase 10.22331/q-2017-11-25-35} {\bibfield  {journal}
  {\bibinfo  {journal} {Quantum}\ }\textbf {\bibinfo {volume} {1}},\ \bibinfo
  {pages} {35} (\bibinfo {year} {2017})}\BibitemShut {NoStop}%
\bibitem [{\citenamefont {Taranto}\ \emph {et~al.}(2019)\citenamefont
  {Taranto}, \citenamefont {Milz}, \citenamefont {Pollock},\ and\ \citenamefont
  {Modi}}]{Taranto2018A}%
  \BibitemOpen
  \bibfield  {author} {\bibinfo {author} {\bibfnamefont {P.}~\bibnamefont
  {Taranto}}, \bibinfo {author} {\bibfnamefont {S.}~\bibnamefont {Milz}},
  \bibinfo {author} {\bibfnamefont {F.~A.}\ \bibnamefont {Pollock}}, \ and\
  \bibinfo {author} {\bibfnamefont {K.}~\bibnamefont {Modi}},\ }\href {\doibase
  10.1103/PhysRevA.99.042108} {\bibfield  {journal} {\bibinfo  {journal} {Phys.
  Rev. A}\ }\textbf {\bibinfo {volume} {99}},\ \bibinfo {pages} {042108}
  (\bibinfo {year} {2019})}\BibitemShut {NoStop}%
\bibitem [{Note1()}]{Note1}%
  \BibitemOpen
  \bibinfo {note} {These are themselves often referred to as memory. Here, we
  distinguish the total temporal correlations between observables from those
  resulting from \protect \textit {non-Markovian} memory, which can survive
  interventions that reset the system's state.}\BibitemShut {Stop}%
\bibitem [{\citenamefont {Ruskai}(2002)}]{Ruskai2002}%
  \BibitemOpen
  \bibfield  {author} {\bibinfo {author} {\bibfnamefont {M.~B.}\ \bibnamefont
  {Ruskai}},\ }\href {\doibase 10.1063/1.1497701} {\bibfield  {journal}
  {\bibinfo  {journal} {J. Math. Phys.}\ }\textbf {\bibinfo {volume} {43}},\
  \bibinfo {pages} {4358} (\bibinfo {year} {2002})}\BibitemShut {NoStop}%
\bibitem [{\citenamefont {Petz}(2003)}]{Petz2003}%
  \BibitemOpen
  \bibfield  {author} {\bibinfo {author} {\bibfnamefont {D.}~\bibnamefont
  {Petz}},\ }\href {\doibase 10.1142/S0129055X03001576} {\bibfield  {journal}
  {\bibinfo  {journal} {Rev. Math. Phys.}\ }\textbf {\bibinfo {volume} {15}},\
  \bibinfo {pages} {79} (\bibinfo {year} {2003})}\BibitemShut {NoStop}%
\bibitem [{\citenamefont {Hayden}\ \emph {et~al.}(2004)\citenamefont {Hayden},
  \citenamefont {Jozsa}, \citenamefont {Petz},\ and\ \citenamefont
  {Winter}}]{Hayden2004}%
  \BibitemOpen
  \bibfield  {author} {\bibinfo {author} {\bibfnamefont {P.}~\bibnamefont
  {Hayden}}, \bibinfo {author} {\bibfnamefont {R.}~\bibnamefont {Jozsa}},
  \bibinfo {author} {\bibfnamefont {D.}~\bibnamefont {Petz}}, \ and\ \bibinfo
  {author} {\bibfnamefont {A.}~\bibnamefont {Winter}},\ }\href {\doibase
  10.1007/s00220-004-1049-z} {\bibfield  {journal} {\bibinfo  {journal}
  {Commun. Math. Phys.}\ }\textbf {\bibinfo {volume} {246}},\ \bibinfo {pages}
  {359} (\bibinfo {year} {2004})}\BibitemShut {NoStop}%
\bibitem [{\citenamefont {Ibinson}\ \emph {et~al.}(2008)\citenamefont
  {Ibinson}, \citenamefont {Linden},\ and\ \citenamefont
  {Winter}}]{Ibinson2008}%
  \BibitemOpen
  \bibfield  {author} {\bibinfo {author} {\bibfnamefont {B.}~\bibnamefont
  {Ibinson}}, \bibinfo {author} {\bibfnamefont {N.}~\bibnamefont {Linden}}, \
  and\ \bibinfo {author} {\bibfnamefont {A.}~\bibnamefont {Winter}},\ }\href
  {\doibase 10.1007/s00220-007-0362-8} {\bibfield  {journal} {\bibinfo
  {journal} {Commun. Math. Phys.}\ }\textbf {\bibinfo {volume} {277}},\
  \bibinfo {pages} {289} (\bibinfo {year} {2008})}\BibitemShut {NoStop}%
\bibitem [{\citenamefont {Fawzi}\ and\ \citenamefont
  {Renner}(2015)}]{Fawzi2015}%
  \BibitemOpen
  \bibfield  {author} {\bibinfo {author} {\bibfnamefont {O.}~\bibnamefont
  {Fawzi}}\ and\ \bibinfo {author} {\bibfnamefont {R.}~\bibnamefont {Renner}},\
  }\href {\doibase 10.1007/s00220-015-2466-x} {\bibfield  {journal} {\bibinfo
  {journal} {Commun. Math. Phys.}\ }\textbf {\bibinfo {volume} {340}},\
  \bibinfo {pages} {575} (\bibinfo {year} {2015})}\BibitemShut {NoStop}%
\bibitem [{\citenamefont {Wilde}(2015)}]{Wilde2015}%
  \BibitemOpen
  \bibfield  {author} {\bibinfo {author} {\bibfnamefont {M.~M.}\ \bibnamefont
  {Wilde}},\ }\href {\doibase 10.1098/rspa.2015.0338} {\bibfield  {journal}
  {\bibinfo  {journal} {Proc. Royal Soc. A}\ }\textbf {\bibinfo {volume}
  {471}},\ \bibinfo {pages} {2182} (\bibinfo {year} {2015})}\BibitemShut
  {NoStop}%
\bibitem [{\citenamefont {Sutter}\ \emph {et~al.}(2016)\citenamefont {Sutter},
  \citenamefont {Fawzi},\ and\ \citenamefont {Renner}}]{Sutter2016}%
  \BibitemOpen
  \bibfield  {author} {\bibinfo {author} {\bibfnamefont {D.}~\bibnamefont
  {Sutter}}, \bibinfo {author} {\bibfnamefont {O.}~\bibnamefont {Fawzi}}, \
  and\ \bibinfo {author} {\bibfnamefont {R.}~\bibnamefont {Renner}},\ }\href
  {\doibase 10.1098/rspa.2015.0623} {\bibfield  {journal} {\bibinfo  {journal}
  {Proc. Royal Soc. A}\ }\textbf {\bibinfo {volume} {472}},\ \bibinfo {pages}
  {2186} (\bibinfo {year} {2016})}\BibitemShut {NoStop}%
\bibitem [{\citenamefont {Sutter}\ \emph {et~al.}(2017)\citenamefont {Sutter},
  \citenamefont {Berta},\ and\ \citenamefont {Tomamichel}}]{Sutter2017}%
  \BibitemOpen
  \bibfield  {author} {\bibinfo {author} {\bibfnamefont {D.}~\bibnamefont
  {Sutter}}, \bibinfo {author} {\bibfnamefont {M.}~\bibnamefont {Berta}}, \
  and\ \bibinfo {author} {\bibfnamefont {M.}~\bibnamefont {Tomamichel}},\
  }\href {\doibase 10.1007/s00220-016-2778-5} {\bibfield  {journal} {\bibinfo
  {journal} {Commun. Math. Phys.}\ }\textbf {\bibinfo {volume} {352}},\
  \bibinfo {pages} {37} (\bibinfo {year} {2017})}\BibitemShut {NoStop}%
\bibitem [{\citenamefont {Jamio{\l}kowski}(1972)}]{Jamiolkowski1972}%
  \BibitemOpen
  \bibfield  {author} {\bibinfo {author} {\bibfnamefont {A.}~\bibnamefont
  {Jamio{\l}kowski}},\ }\href {\doibase
  https://doi.org/10.1016/0034-4877(72)90011-0} {\bibfield  {journal} {\bibinfo
   {journal} {Rep. Math. Phys.}\ }\textbf {\bibinfo {volume} {3}},\ \bibinfo
  {pages} {275} (\bibinfo {year} {1972})}\BibitemShut {NoStop}%
\bibitem [{\citenamefont {Choi}(1975)}]{Choi1975}%
  \BibitemOpen
  \bibfield  {author} {\bibinfo {author} {\bibfnamefont {M.-D.}\ \bibnamefont
  {Choi}},\ }\href {\doibase https://doi.org/10.1016/0024-3795(75)90075-0}
  {\bibfield  {journal} {\bibinfo  {journal} {Linear Algebra Appl.}\ }\textbf
  {\bibinfo {volume} {10}},\ \bibinfo {pages} {285} (\bibinfo {year}
  {1975})}\BibitemShut {NoStop}%
\bibitem [{\citenamefont {Milz}\ \emph {et~al.}(2018)\citenamefont {Milz},
  \citenamefont {Pollock}, \citenamefont {Le}, \citenamefont {Chiribella},\
  and\ \citenamefont {Modi}}]{Milz2018}%
  \BibitemOpen
  \bibfield  {author} {\bibinfo {author} {\bibfnamefont {S.}~\bibnamefont
  {Milz}}, \bibinfo {author} {\bibfnamefont {F.~A.}\ \bibnamefont {Pollock}},
  \bibinfo {author} {\bibfnamefont {T.~P.}\ \bibnamefont {Le}}, \bibinfo
  {author} {\bibfnamefont {G.}~\bibnamefont {Chiribella}}, \ and\ \bibinfo
  {author} {\bibfnamefont {K.}~\bibnamefont {Modi}},\ }\href {\doibase
  10.1088/1367-2630/aaafee} {\bibfield  {journal} {\bibinfo  {journal} {New J.
  Phys.}\ }\textbf {\bibinfo {volume} {20}},\ \bibinfo {pages} {033033}
  (\bibinfo {year} {2018})}\BibitemShut {NoStop}%
\bibitem [{\citenamefont {Ringbauer}\ \emph {et~al.}(2015)\citenamefont
  {Ringbauer}, \citenamefont {Wood}, \citenamefont {Modi}, \citenamefont
  {Gilchrist}, \citenamefont {White},\ and\ \citenamefont
  {Fedrizzi}}]{Ringbauer2015}%
  \BibitemOpen
  \bibfield  {author} {\bibinfo {author} {\bibfnamefont {M.}~\bibnamefont
  {Ringbauer}}, \bibinfo {author} {\bibfnamefont {C.~J.}\ \bibnamefont {Wood}},
  \bibinfo {author} {\bibfnamefont {K.}~\bibnamefont {Modi}}, \bibinfo {author}
  {\bibfnamefont {A.}~\bibnamefont {Gilchrist}}, \bibinfo {author}
  {\bibfnamefont {A.~G.}\ \bibnamefont {White}}, \ and\ \bibinfo {author}
  {\bibfnamefont {A.}~\bibnamefont {Fedrizzi}},\ }\href {\doibase
  10.1103/PhysRevLett.114.090402} {\bibfield  {journal} {\bibinfo  {journal}
  {Phys. Rev. Lett.}\ }\textbf {\bibinfo {volume} {114}},\ \bibinfo {pages}
  {090402} (\bibinfo {year} {2015})}\BibitemShut {NoStop}%
\bibitem [{\citenamefont {Kleinhans}\ \emph {et~al.}(2007)\citenamefont
  {Kleinhans}, \citenamefont {Friedrich}, \citenamefont {W\"achter},\ and\
  \citenamefont {Peinke}}]{Kleinhans2007}%
  \BibitemOpen
  \bibfield  {author} {\bibinfo {author} {\bibfnamefont {D.}~\bibnamefont
  {Kleinhans}}, \bibinfo {author} {\bibfnamefont {R.}~\bibnamefont
  {Friedrich}}, \bibinfo {author} {\bibfnamefont {M.}~\bibnamefont
  {W\"achter}}, \ and\ \bibinfo {author} {\bibfnamefont {J.}~\bibnamefont
  {Peinke}},\ }\href {\doibase 10.1103/PhysRevE.76.041109} {\bibfield
  {journal} {\bibinfo  {journal} {Phys. Rev. E}\ }\textbf {\bibinfo {volume}
  {76}},\ \bibinfo {pages} {041109} (\bibinfo {year} {2007})}\BibitemShut
  {NoStop}%
\bibitem [{\citenamefont {Wiesner}\ and\ \citenamefont
  {Crutchfield}(2006)}]{Wiesner2006}%
  \BibitemOpen
  \bibfield  {author} {\bibinfo {author} {\bibfnamefont {K.}~\bibnamefont
  {Wiesner}}\ and\ \bibinfo {author} {\bibfnamefont {J.~P.}\ \bibnamefont
  {Crutchfield}},\ }\href {https://arxiv.org/abs/quant-ph/0611143} {\bibfield
  {journal} {\bibinfo  {journal} {arXiv:0611143}\ } (\bibinfo {year}
  {2006})}\BibitemShut {NoStop}%
\bibitem [{\citenamefont {Siefert}\ \emph {et~al.}(2003)\citenamefont
  {Siefert}, \citenamefont {Kittel}, \citenamefont {Friedrich},\ and\
  \citenamefont {Peinke}}]{Siefert2003}%
  \BibitemOpen
  \bibfield  {author} {\bibinfo {author} {\bibfnamefont {M.}~\bibnamefont
  {Siefert}}, \bibinfo {author} {\bibfnamefont {A.}~\bibnamefont {Kittel}},
  \bibinfo {author} {\bibfnamefont {R.}~\bibnamefont {Friedrich}}, \ and\
  \bibinfo {author} {\bibfnamefont {J.}~\bibnamefont {Peinke}},\ }\href
  {\doibase 10.1209/epl/i2003-00152-9} {\bibfield  {journal} {\bibinfo
  {journal} {Europhys. Lett.}\ }\textbf {\bibinfo {volume} {61}},\ \bibinfo
  {pages} {466} (\bibinfo {year} {2003})}\BibitemShut {NoStop}%
\bibitem [{\citenamefont {B\"ottcher}\ \emph {et~al.}(2006)\citenamefont
  {B\"ottcher}, \citenamefont {Peinke}, \citenamefont {Kleinhans},
  \citenamefont {Friedrich}, \citenamefont {Lind},\ and\ \citenamefont
  {Haase}}]{Bottcher2006}%
  \BibitemOpen
  \bibfield  {author} {\bibinfo {author} {\bibfnamefont {F.}~\bibnamefont
  {B\"ottcher}}, \bibinfo {author} {\bibfnamefont {J.}~\bibnamefont {Peinke}},
  \bibinfo {author} {\bibfnamefont {D.}~\bibnamefont {Kleinhans}}, \bibinfo
  {author} {\bibfnamefont {R.}~\bibnamefont {Friedrich}}, \bibinfo {author}
  {\bibfnamefont {P.~G.}\ \bibnamefont {Lind}}, \ and\ \bibinfo {author}
  {\bibfnamefont {M.}~\bibnamefont {Haase}},\ }\href {\doibase
  10.1103/PhysRevLett.97.090603} {\bibfield  {journal} {\bibinfo  {journal}
  {Phys. Rev. Lett.}\ }\textbf {\bibinfo {volume} {97}},\ \bibinfo {pages}
  {090603} (\bibinfo {year} {2006})}\BibitemShut {NoStop}%
\bibitem [{\citenamefont {Lehle}(2011)}]{Lehle2011}%
  \BibitemOpen
  \bibfield  {author} {\bibinfo {author} {\bibfnamefont {B.}~\bibnamefont
  {Lehle}},\ }\href {\doibase 10.1103/PhysRevE.83.021113} {\bibfield  {journal}
  {\bibinfo  {journal} {Phys. Rev. E}\ }\textbf {\bibinfo {volume} {83}},\
  \bibinfo {pages} {021113} (\bibinfo {year} {2011})}\BibitemShut {NoStop}%
\bibitem [{\citenamefont {Makri}\ and\ \citenamefont
  {Makarov}(1995{\natexlab{a}})}]{Makri1995JCP-1}%
  \BibitemOpen
  \bibfield  {author} {\bibinfo {author} {\bibfnamefont {N.}~\bibnamefont
  {Makri}}\ and\ \bibinfo {author} {\bibfnamefont {D.~E.}\ \bibnamefont
  {Makarov}},\ }\href {\doibase 10.1063/1.469508} {\bibfield  {journal}
  {\bibinfo  {journal} {J. Chem. Phys.}\ }\textbf {\bibinfo {volume} {102}},\
  \bibinfo {pages} {4600} (\bibinfo {year} {1995}{\natexlab{a}})}\BibitemShut
  {NoStop}%
\bibitem [{\citenamefont {Makri}\ and\ \citenamefont
  {Makarov}(1995{\natexlab{b}})}]{Makri1995JCP-2}%
  \BibitemOpen
  \bibfield  {author} {\bibinfo {author} {\bibfnamefont {N.}~\bibnamefont
  {Makri}}\ and\ \bibinfo {author} {\bibfnamefont {D.~E.}\ \bibnamefont
  {Makarov}},\ }\href {\doibase 10.1063/1.469509} {\bibfield  {journal}
  {\bibinfo  {journal} {J. Chem. Phys.}\ }\textbf {\bibinfo {volume} {102}},\
  \bibinfo {pages} {4611} (\bibinfo {year} {1995}{\natexlab{b}})}\BibitemShut
  {NoStop}%
\bibitem [{\citenamefont {de~Vega}\ and\ \citenamefont
  {Alonso}(2017)}]{deVega2017}%
  \BibitemOpen
  \bibfield  {author} {\bibinfo {author} {\bibfnamefont {I.}~\bibnamefont
  {de~Vega}}\ and\ \bibinfo {author} {\bibfnamefont {D.}~\bibnamefont
  {Alonso}},\ }\href {\doibase 10.1103/RevModPhys.89.015001} {\bibfield
  {journal} {\bibinfo  {journal} {Rev. Mod. Phys.}\ }\textbf {\bibinfo {volume}
  {89}},\ \bibinfo {pages} {015001} (\bibinfo {year} {2017})}\BibitemShut
  {NoStop}%
\bibitem [{\citenamefont {Pollock}\ and\ \citenamefont
  {Modi}(2018{\natexlab{b}})}]{Pollock2018Q}%
  \BibitemOpen
  \bibfield  {author} {\bibinfo {author} {\bibfnamefont {F.~A.}\ \bibnamefont
  {Pollock}}\ and\ \bibinfo {author} {\bibfnamefont {K.}~\bibnamefont {Modi}},\
  }\href {\doibase 10.22331/q-2018-07-11-76} {\bibfield  {journal} {\bibinfo
  {journal} {{Quantum}}\ }\textbf {\bibinfo {volume} {2}},\ \bibinfo {pages}
  {76} (\bibinfo {year} {2018}{\natexlab{b}})}\BibitemShut {NoStop}%
\bibitem [{\citenamefont {Clemens}\ \emph {et~al.}(2004)\citenamefont
  {Clemens}, \citenamefont {Siddiqui},\ and\ \citenamefont
  {Gea-Banacloche}}]{Clemens2004}%
  \BibitemOpen
  \bibfield  {author} {\bibinfo {author} {\bibfnamefont {J.~P.}\ \bibnamefont
  {Clemens}}, \bibinfo {author} {\bibfnamefont {S.}~\bibnamefont {Siddiqui}}, \
  and\ \bibinfo {author} {\bibfnamefont {J.}~\bibnamefont {Gea-Banacloche}},\
  }\href {\doibase 10.1103/PhysRevA.69.062313} {\bibfield  {journal} {\bibinfo
  {journal} {Phys. Rev. A}\ }\textbf {\bibinfo {volume} {69}},\ \bibinfo
  {pages} {062313} (\bibinfo {year} {2004})}\BibitemShut {NoStop}%
\bibitem [{\citenamefont {Chiribella}\ \emph {et~al.}(2011)\citenamefont
  {Chiribella}, \citenamefont {Dall'Arno}, \citenamefont {D'Ariano},
  \citenamefont {Macchiavello},\ and\ \citenamefont
  {Perinotti}}]{Chiribella2011}%
  \BibitemOpen
  \bibfield  {author} {\bibinfo {author} {\bibfnamefont {G.}~\bibnamefont
  {Chiribella}}, \bibinfo {author} {\bibfnamefont {M.}~\bibnamefont
  {Dall'Arno}}, \bibinfo {author} {\bibfnamefont {G.~M.}\ \bibnamefont
  {D'Ariano}}, \bibinfo {author} {\bibfnamefont {C.}~\bibnamefont
  {Macchiavello}}, \ and\ \bibinfo {author} {\bibfnamefont {P.}~\bibnamefont
  {Perinotti}},\ }\href {\doibase 10.1103/PhysRevA.83.052305} {\bibfield
  {journal} {\bibinfo  {journal} {Phys. Rev. A}\ }\textbf {\bibinfo {volume}
  {83}},\ \bibinfo {pages} {052305} (\bibinfo {year} {2011})}\BibitemShut
  {NoStop}%
\bibitem [{\citenamefont {Ben-Aroya}\ and\ \citenamefont
  {Ta-Shma}(2011)}]{Ben-Aroya2011}%
  \BibitemOpen
  \bibfield  {author} {\bibinfo {author} {\bibfnamefont {A.}~\bibnamefont
  {Ben-Aroya}}\ and\ \bibinfo {author} {\bibfnamefont {A.}~\bibnamefont
  {Ta-Shma}},\ }\href {\doibase 10.1109/TIT.2011.2134410} {\bibfield  {journal}
  {\bibinfo  {journal} {IEEE Trans. Inf. Theory}\ }\textbf {\bibinfo {volume}
  {57}},\ \bibinfo {pages} {3982} (\bibinfo {year} {2011})}\BibitemShut
  {NoStop}%
\bibitem [{\citenamefont {Lupo}\ \emph {et~al.}(2012)\citenamefont {Lupo},
  \citenamefont {Memarzadeh},\ and\ \citenamefont {Mancini}}]{Lupo2012}%
  \BibitemOpen
  \bibfield  {author} {\bibinfo {author} {\bibfnamefont {C.}~\bibnamefont
  {Lupo}}, \bibinfo {author} {\bibfnamefont {L.}~\bibnamefont {Memarzadeh}}, \
  and\ \bibinfo {author} {\bibfnamefont {S.}~\bibnamefont {Mancini}},\ }\href
  {\doibase 10.1103/PhysRevA.85.012320} {\bibfield  {journal} {\bibinfo
  {journal} {Phys. Rev. A}\ }\textbf {\bibinfo {volume} {85}},\ \bibinfo
  {pages} {012320} (\bibinfo {year} {2012})}\BibitemShut {NoStop}%
\bibitem [{\citenamefont {Edmunds}\ \emph {et~al.}(2017)\citenamefont
  {Edmunds}, \citenamefont {Hempel}, \citenamefont {Harris}, \citenamefont
  {Ball}, \citenamefont {Frey}, \citenamefont {Stace},\ and\ \citenamefont
  {Biercuk}}]{Edmunds2017}%
  \BibitemOpen
  \bibfield  {author} {\bibinfo {author} {\bibfnamefont {C.~L.}\ \bibnamefont
  {Edmunds}}, \bibinfo {author} {\bibfnamefont {C.}~\bibnamefont {Hempel}},
  \bibinfo {author} {\bibfnamefont {R.}~\bibnamefont {Harris}}, \bibinfo
  {author} {\bibfnamefont {H.}~\bibnamefont {Ball}}, \bibinfo {author}
  {\bibfnamefont {V.}~\bibnamefont {Frey}}, \bibinfo {author} {\bibfnamefont
  {T.~M.}\ \bibnamefont {Stace}}, \ and\ \bibinfo {author} {\bibfnamefont
  {M.~J.}\ \bibnamefont {Biercuk}},\ }\href {https://arxiv.org/abs/1712.04954}
  {\bibfield  {journal} {\bibinfo  {journal} {arXiv:1712.04954}\ } (\bibinfo
  {year} {2017})}\BibitemShut {NoStop}%
\bibitem [{\citenamefont {Grimsmo}(2015)}]{Grimsmo2015}%
  \BibitemOpen
  \bibfield  {author} {\bibinfo {author} {\bibfnamefont {A.~L.}\ \bibnamefont
  {Grimsmo}},\ }\href {\doibase 10.1103/PhysRevLett.115.060402} {\bibfield
  {journal} {\bibinfo  {journal} {Phys. Rev. Lett.}\ }\textbf {\bibinfo
  {volume} {115}},\ \bibinfo {pages} {060402} (\bibinfo {year}
  {2015})}\BibitemShut {NoStop}%
\bibitem [{\citenamefont {Whalen}\ \emph {et~al.}(2017)\citenamefont {Whalen},
  \citenamefont {Grimsmo},\ and\ \citenamefont {Carmichael}}]{Whalen2017}%
  \BibitemOpen
  \bibfield  {author} {\bibinfo {author} {\bibfnamefont {S.~J.}\ \bibnamefont
  {Whalen}}, \bibinfo {author} {\bibfnamefont {A.~L.}\ \bibnamefont {Grimsmo}},
  \ and\ \bibinfo {author} {\bibfnamefont {H.~J.}\ \bibnamefont {Carmichael}},\
  }\href {\doibase 10.1088/2058-9565/aa8331} {\bibfield  {journal} {\bibinfo
  {journal} {Quantum Sci. Technol.}\ }\textbf {\bibinfo {volume} {2}},\
  \bibinfo {pages} {044008} (\bibinfo {year} {2017})}\BibitemShut {NoStop}%
\bibitem [{Note2()}]{Note2}%
  \BibitemOpen
  \bibinfo {note} {In the original formulation, the process tensor also outputs
  the final density operator of the system~\cite {Pollock2018A}, which is not
  important to our analysis here. Further, note that Eq.~\protect \textup
  {\hbox {\mathsurround \z@ \protect \normalfont (\ignorespaces \ref
  {eq:generalizedborn}\unskip \@@italiccorr )}} (Eq.~\protect \textup {\hbox
  {\mathsurround \z@ \protect \normalfont (\ignorespaces \ref
  {eq:processtensor}\unskip \@@italiccorr )}} in the main text) is correct up
  to a transpose of $O_{n:0}^{(x_{n:0})}$; we do not explicitly indicate this
  transposition to ease notation, with no affect on any results.}\BibitemShut
  {Stop}%
\end{thebibliography}


%


\end{document}